\documentclass[aps,pra,reprint,superscriptaddress,nofootinbib,floatfix,longbibliography]{revtex4-2}

\usepackage{graphicx}
\usepackage{amsmath,amssymb,amsthm,mathtools,bm}
\usepackage{algorithm}
\usepackage{algpseudocode}
\usepackage{booktabs}
\usepackage{float}
\usepackage{enumitem}
\usepackage{url}
\usepackage{xcolor}
\usepackage{hyperref}
\graphicspath{{figures/}}

\hypersetup{colorlinks=true,linkcolor=blue,citecolor=blue,urlcolor=blue}

\newtheorem{definition}{Definition}[section]
\newtheorem{theorem}[definition]{Theorem}
\newtheorem{proposition}[definition]{Proposition}

\newtheorem{remark}[definition]{Remark}

\DeclareMathOperator{\Tr}{Tr}
\DeclareMathOperator{\rank}{rank}

\DeclareMathOperator{\spec}{spec}

\newcommand{\R}{\mathbb R}
\newcommand{\C}{\mathbb C}

\newcommand{\EE}{\mathbb E}
\newcommand{\Herm}{\mathrm{Herm}}

\newcommand{\MM}{\mathcal M}
\newcommand{\ket}[1]{\lvert #1\rangle}
\newcommand{\bra}[1]{\langle #1\rvert}
\newcommand{\ketbra}[1]{\lvert #1\rangle\!\langle #1\rvert}
\newcommand{\braket}[2]{\langle #1\mid #2\rangle}

\newcommand{\calT}{\mathcal T}

\newcommand{\calJ}{\mathcal J}

\numberwithin{equation}{section}

\begin{document}
\title{Quantum Spectral Anomaly Detection}
\author{Yewei Yuan}
\affiliation{Global College, Shanghai Jiao Tong University, Shanghai 200240, China}

\author{Michele Minervini}
\affiliation{School of Electrical and Computer Engineering, Cornell University, Ithaca, New York 14850, United States}

\author{Mark M. Wilde}
\affiliation{School of Electrical and Computer Engineering, Cornell University, Ithaca, New York 14850, United States}

\author{Nana Liu}
\affiliation{Global College, Shanghai Jiao Tong University, Shanghai 200240, China}
\affiliation{Institute of Natural Sciences, Shanghai Jiao Tong University, Shanghai 200240, China}
\affiliation{School of Mathematical Sciences, Shanghai Jiao Tong University, Shanghai 200240, China}
\date{\today}

\begin{abstract}
A core task in quantum anomaly detection is to compute an anomaly score that quantifies how strongly a test quantum state deviates from a given quantum dataset assumed to be normal.
Classically, principal component analysis (PCA) for centered data computes the anomaly score by evaluating the test sample relative to the subspace spanned by the selected leading eigenvectors.
However, for quantum data that lack a standard centering, explicitly recovering principal eigenvectors, constructing full Gram matrices, or loading quantum-random-access-memory-style data can be more costly than estimating the anomaly score itself.
To avoid these costs, we propose Quantum Spectral Anomaly Detection (QSPADE), which computes PCA-like anomaly scores directly from the spectrum of the average state of the normal dataset.
By replacing hard PCA rank selection with a smooth, temperature-controlled spectral threshold, QSPADE makes near-threshold spectral components contribute partially to the anomaly score.
This makes the score vary continuously rather than jump when a borderline component is included or excluded, and makes it less sensitive to noise or arbitrary hard cutoffs near the threshold.
In the zero-temperature limit, QSPADE recovers the hard-projector PCA score.
The proposed measurement-based quantum detector can be calibrated with a sample complexity independent of the data dimension. 
Numerical simulations show that QSPADE behaves like kernel-PCA on encoded classical data and detects changes across a transverse-field Ising transition without predefined order parameters.
Consequently, QSPADE gives an efficient framework for both quantum-kernel anomaly detection on encoded classical data and the monitoring of quantum-native systems where diagnostic observables are unknown.
\end{abstract}

\maketitle


\section{Introduction}
\label{sec:intro}
Anomaly detection is a one-class learning problem: given training examples that are assumed to be normal, we want to assign a new test object an anomaly score, with larger values indicating stronger deviation from the normal data.
Principal component analysis (PCA)~\cite{Jolliffe2002} gives a spectral way to construct such scores. 
After centering the normal data, the sample covariance matrix identifies directions of variation: large eigenvalues correspond to directions strongly represented in the normal data, while small eigenvalues correspond to weakly represented directions. The large-eigenvalue subspace is the principal subspace. 
Relative to this subspace, PCA produces two anomaly scores. First, it can have a large residual weight outside this subspace, as measured by the residual statistic $Q$. 
Second, it can lie inside the subspace but have unusually large weight along a low-variance direction, as measured by Hotelling's $\mathcal T^2$ statistic~\cite{Hotelling1931,JacksonMudholkar1979}.

For either method, the desired output is therefore a single scalar anomaly score. This distinction is especially important for quantum data. 
When the data are naturally quantum, such as states produced by a quantum device, or when classical data are already represented as quantum feature states, the task is to estimate this scalar score without requiring full spectral decomposition information.
However, the standard PCA workflow is not naturally matched to this goal. 
Classically, one constructs and diagonalizes a covariance or Gram matrix for centered data, chooses a target rank according to a retained-variance criterion, stores the corresponding eigenvectors, and projects each test point onto them.
For quantum data, such tasks may solve a harder problem than anomaly detection itself.

These extra costs are also visible in existing quantum anomaly detection methods~\cite{Corli2025QMLADReview}.  Quantum kernel PCA and one-class SVM~\cite{LiuRebentrost2018} use quantum kernels to obtain nonlinear feature representations, but rely on a pairwise-kernel representation and quantum-random-access-memory-style access assumption.  
Quantum PCA and quantum linear-algebra algorithms provide direct spectral access to density operators or block-encoded matrices~\cite{Lloyd2014,Gilyen2019}. However, standard qPCA usually aims at extracting or projecting onto principal spectral information, rather than directly calibrating anomaly scores. 
For quantum native data, reproducing standard PCA also requires treating centering explicitly, since no classical feature vectors or centered Gram matrix are directly available. 
Variational one-class models and quantum autoencoders provide another route~\cite{ParkHuhPark2023VQOCC,OhPark2024QSVDD,Kottmann2021VQAD}, but they replace reconstruction by trainable circuits, which introduce optimization loops and trainability concerns such as barren plateaus~\cite{McClean2018Barren}.
Recent works show that full eigenvector recovery is unnecessary when PCA score is the only target output~\cite{HossainBhattacharjee2026FSPA,Yuan2026QPCA}. They reduce the complexity and sensitivity caused by eigenvalue resolution in standard PCA.
Following this development, we seek a simple end-to-end quantum anomaly detection method avoiding the extra reconstruction, matrix-access, rank-selection, and optimization cost.

In this paper, we propose \emph{Quantum Spectral Anomaly Detection} (QSPADE), a measurement-based approach for computing PCA-like anomaly scores directly from the spectrum of the average state of the normal dataset. 
Given $N\in\mathbb{N}$ normal states $\rho_i$, the average state is a mixed state
\begin{equation*}
    C\coloneqq\bar\rho=N^{-1}\sum_i\rho_i.
\end{equation*}   
QSPADE then defines the regularized measurement-based spectral detector
\begin{equation*}
M^f_{\mu,T}\coloneqq f\!\left(\frac{C-\mu I}{T}\right),
\end{equation*}
where $f$ is a monotone smooth response function taking values in $[0,1]$ (so that $M^f_{\mu,T}$ is a measurement operator), $\mu$ is a scalar threshold corresponding to the prescribed acceptance level, and the temperature $T>0$ controls how sharply the response changes near the threshold. 
Specifically, $f$ responds to the comparison results between the eigenvalues of $C$ and the threshold $\mu$. In the limit $T\to0$, this response becomes a step function, recovering the hard-projector PCA score.
Thus, high-eigenvalue spectral components of $C$ contribute little to the residual score, low-eigenvalue components contribute strongly, and near-threshold components contribute fractionally.
For a test state $\sigma$, the probability $1-\Tr(M^f_{\mu,T}\sigma)$ corresponds to the final residual score. 
The threshold $\mu$ is calibrated by matching the average acceptance probability $\Tr(M^f_{\mu,T}C)$ to a target retained mass, rather than choosing how many eigenvectors to keep.
The calibration requires $O(\eta^{-2}\log\delta^{-1})$ sample complexity, independent of the data dimension, where $\eta >0$ and $\delta\in (0,1)$ are the target precision and failure probability, respectively.

This work contributes an efficient quantum anomaly detection method that focuses on directly evaluating target anomaly scores from multiple copies of given quantum data, bypassing redundant costs. 
We show how classical $Q$ and Hotelling's $\mathcal{T}^{2}$ scores translate to a quantum system.
We provide corresponding quantum implementations using the measurement-based spectral detector, including both qumode and qubit circuits for different anomaly scores. 
The qubit approach is natural for single-threshold residual score calculation, while the qumode approach excels at resolving how different spectra contribute to the anomaly score.
We also identify the distinct calibration resources required by these methods.
Finally, we validate the framework numerically: QSPADE successfully recovers the nonlinear decision geometries, and it robustly tracks quantum phase transitions in the transverse-field Ising model without requiring any prior specification of order parameters. The results show that QSPADE is effective for both quantum-embedded classical data and quantum-native data.

The rest of this paper is organized as follows.  Section~\ref{sec:classical-background} reviews the classical PCA monitoring target.  Section~\ref{sec:nominal-support} defines nominal spectral support for quantum data and states the hard spectral detector.  Section~\ref{sec:regularized-detector} introduces the regularized spectral detector.  Section~\ref{sec:qsad-statistics} defines the QSPADE $Q$ and $\calT^{2}$ statistics and gives their sharp-limit correspondence with the classical quantities.  Section~\ref{sec:implementations} gives the qumode and qubit implementation.  Section~\ref{sec:numerics} reports numerical experiment results.

\section{Classical PCA monitoring background}
\label{sec:classical-background}

For $d\in\mathbb{N}$, let $x_1,\ldots,x_N\in\R^d$ be training vectors assumed to be normal.  Classical PCA monitoring first centers the data,
\begin{equation}
    \bar x\coloneqq \frac1N\sum_{i=1}^N x_i,
    \qquad
    y_i\coloneqq x_i-\bar x,
\end{equation}
and then forms the sample covariance
\begin{equation}
\label{eq:classical-covariance}
    C_{\rm cl}\coloneqq \frac1N\sum_{i=1}^N y_iy_i^\top.
\end{equation}
This centering step is part of the model, not merely preprocessing: the resulting statistics detect deviations around the mean vector.  A test input $x$ is therefore evaluated through its centered form $z=x-\bar x$.

Write the spectral decomposition of the sample covariance as
\begin{equation}
\label{eq:classical-spectrum}
    C_{\rm cl}=\sum_{j=1}^d\lambda_j u_ju_j^\top,
    \qquad
    \lambda_1\ge\lambda_2\ge\cdots\ge\lambda_d\ge0,
\end{equation}
and let
\begin{equation}
    P_K\coloneqq \sum_{j=1}^K u_ju_j^\top
\end{equation}
be the rank-$K$ principal projector.  The rank $K$ is often chosen by a given retained variance target $\alpha \in (0, 1)$:
\begin{equation}
\label{eq:classical-retained}
    K_\alpha\coloneqq 
    \min\!\left\{K:
    \frac{\sum_{j=1}^K\lambda_j}{\sum_{j=1}^d\lambda_j}\ge\alpha
    \right\}.
\end{equation}
For a centered test vector $z$, the classical residual statistic called $Q$ is defined as
\begin{equation}
\label{eq:classical-Q}
    Q_K(z)\coloneqq \left\|(I-P_K)z\right\|^2
    =\sum_{j>K}\left|\langle u_j,z\rangle\right|^2.
\end{equation}
It is large when the input has energy outside the retained principal subspace, which is commonly referred to as the nominal support.

Hotelling's statistic instead measures how the test vector is distributed inside the retained principal subspace.
It uses the squared components of $z$ along the retained eigenvectors $u_1,\ldots,u_K$, and weights each component by the inverse of the corresponding covariance eigenvalue:
\begin{equation}
\label{eq:classical-T2}
    \calT^{2}_{K,\gamma}(z)
    \coloneqq \sum_{j=1}^{K}\frac{|\langle u_j,z\rangle|^2}{\lambda_j+\gamma},
    \qquad \gamma\ge0.
\end{equation}
The ridge parameter $\gamma$ is included to make $\calT^{2}$ stable when retained eigenvalues are small or poorly estimated.

The two statistics have different roles.  In one-class support monitoring, $Q$ is usually the default score because it directly tests whether a sample leaves the principal subspace and does not require inverse spectral weights.  
Hotelling's $\calT^{2}$ is useful when anomalies are expected to remain within the nominal support but have abnormal leverage, for example, a large score in a low-variance retained eigenvector. It requires more structure: a reliable choice of which eigenvectors are retained, stable variance denominators, and a calibration rule for the resulting normalized score. QSPADE maintains this distinction.  The quantum residual score is defined in Section~\ref{subsec:qsad-Q}; the quantum Hotelling-type score is defined in Section~\ref{subsec:qsad-T2}.

\section{Nominal support for quantum data}

\label{sec:nominal-support}

We now translate the PCA anomaly detection idea to the quantum setting. 
In classical PCA, the nominal support is the high-variance spectral region of the covariance matrix.
For quantum data, the analogous object is the high-eigenvalue spectral region of the average state of the normal dataset $\{\rho_i\}_{i=1}^N$:
\begin{equation}
\label{eq:nominal-mixture}
    C\coloneqq \bar\rho=\frac1N\sum_{i=1}^N\rho_i.
\end{equation}

Let $\mathcal H$ be a $d$-dimensional Hilbert space, and define the effect set
\begin{equation}
    \MM(\mathcal H)\coloneqq \{M\in\Herm(\mathcal H):0\preceq M\preceq I\}.
\end{equation}
For every measurement effect $M$,
\begin{equation}
\label{eq:average-acceptance}
    \Tr(M\bar\rho)=\frac1N\sum_{i=1}^N\Tr(M\rho_i).
\end{equation}
The operational meaning of~\eqref{eq:average-acceptance} is simple: an effect~$M$ accepts normal data with probability $\Tr(MC)$.
Thus, learning a nominal support can be phrased as choosing a measurement effect that accepts the average normal state as much as possible under a rank constraint.

\begin{remark}
\label{rem:centering-convention}
Quantum-native states are already normalized objects: pure states are on the unit sphere up to phase, and density operators have trace one and are positive semidefinite.  Subtracting a mean state generally produces a matrix that is not a state, and subtracting vector amplitudes is not invariant under global phase choices.  Therefore, no default centering operation preserves the physical sample space and the Born-probability interpretation in~\eqref{eq:average-acceptance}.  QSPADE consequently uses the uncentered average state in~\eqref{eq:nominal-mixture}.

For classical data, the recommended convention is different.  If the goal is to reproduce centered PCA monitoring, one centers in the classical domain first, $y_i=x_i-\bar x$ and $z=x-\bar x$, and then embeds the centered vectors into quantum states.  After this embedding, the resulting objects are treated as quantum data by QSPADE.  
\end{remark}

Let a spectral decomposition of $C$ be as follows:
\begin{equation}
\label{eq:C-spectrum}
    C=\sum_{j=1}^d\lambda_j\ketbra{u_j},
    \qquad
    \lambda_1\ge\lambda_2\ge\cdots\ge\lambda_d\ge0.
\end{equation}
The high-eigenvalue eigenspaces of $C$ represent the spectral components most strongly supported by the normal data.
The following proposition formalizes this statement.

\begin{proposition}[Hard spectral detector]
\label{prop:hard-support}
For $1\le K\le d$,
\begin{equation}
\label{eq:hard-support-optimization}
    \max_{\substack{M\in\MM(\mathcal H)\\ \rank(M)\le K}}
    \Tr(MC)
    =\sum_{j=1}^K\lambda_j.
\end{equation}
The maximum is attained by any top-$K$ spectral projector
\begin{equation}
    P_K=\sum_{j=1}^K\ketbra{u_j}.
\end{equation}
If $\lambda_K>\lambda_{K+1}$, the top-$K$ projector is unique.
\end{proposition}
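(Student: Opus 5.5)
We prove the two inequalities in \eqref{eq:hard-support-optimization} separately and then handle uniqueness.

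\emph{Achievability.} Take $M=P_K$. Then $0\preceq P_K\preceq I$ and $\rank(P_K)=K$, so $P_K$ is feasible. Orthonormality of the $u_j$ gives $\Tr(P_KC)=\sum_{j=1}^K\lambda_j$. This holds for every choice of top-$K$ eigenbasis, so any top-$K$ spectral projector attains the value.

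\emph{Upper bound.} Let $M$ be feasible with $\rank(M)=r\le K$. Write $M=\sum_{i=1}^r m_i\ketbra{v_i}$ with $m_i\in[0,1]$ and $v_i$ orthonormal. The idea is to compare $M$ with the projector $\Pi$ onto its range.

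First, $M\preceq\Pi$, because each $m_i\le1$. Since $C\succeq0$, this gives $\Tr(MC)\le\Tr(\Pi C)$.

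Second, $\Pi$ is a projector of rank $r\le K$, so Ky Fan's maximum principle applies:
\begin{equation*}
\Tr(\Pi C)\le\sum_{j=1}^r\lambda_j\le\sum_{j=1}^K\lambda_j ,
\end{equation*}
where the last step uses $\lambda_j\ge0$.

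If we want a self-contained argument instead of citing Ky Fan, we use a majorization step. Set $a_j\coloneqq\bra{u_j}\Pi\ket{u_j}$. Then $a_j\in[0,1]$ and $\sum_j a_j=r$. Because the $\lambda_j$ are sorted in decreasing order, maximizing $\sum_j a_j\lambda_j$ under these constraints puts $a_j=1$ for $j\le r$ and $0$ otherwise, which gives $\sum_j a_j\lambda_j\le\sum_{j\le r}\lambda_j$. This bound is the only real content of the proof.

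\emph{Uniqueness.} Assume $\lambda_K>\lambda_{K+1}$. The top-$K$ eigenspace of $C$ is then well defined, so every top-$K$ spectral projector equals the same operator $P_K$. This already proves the statement as written.

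For the stronger claim that $P_K$ is the only maximizer among all feasible effects, we trace the equality cases. With $a_j=\bra{u_j}M\ket{u_j}\in[0,1]$, we have $\Tr(MC)=\sum_j a_j\lambda_j$ and $\sum_j a_j=\Tr M\le\rank(M)\le K$. Subtracting,
\begin{equation*}
\sum_{j\le K}\lambda_j-\sum_j a_j\lambda_j=\sum_{j\le K}(1-a_j)\lambda_j-\sum_{j>K}a_j\lambda_j\ge(\lambda_K-\lambda_{K+1})\sum_{j>K}a_j ,
\end{equation*}
using $\lambda_j\ge\lambda_K$ for $j\le K$, $\lambda_j\le\lambda_{K+1}$ for $j>K$, and $\sum_{j\le K}(1-a_j)\ge\sum_{j>K}a_j$.

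At a maximum the left side is zero. The gap assumption then forces $a_j=0$ for $j>K$. Since $M\succeq0$, this means $M$ vanishes on $\mathrm{span}\{u_j\}_{j>K}$, so $M$ is supported in $\mathrm{ran}\,P_K$.

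Within that support, $\Tr(MC)=\sum_{j\le K}a_j\lambda_j$. There is one degenerate case: if $\lambda_K=0$, weights on zero eigenvalues are invisible and the maximizer need not be unique. Otherwise $\lambda_K>0$, equality forces $a_j=1$ for $j\le K$, and so $\Tr M=K=\rank(M)$. An operator $0\preceq M\preceq I$ whose trace equals its rank has all nonzero eigenvalues equal to $1$, so $M$ is a rank-$K$ projector. Its range lies inside $\mathrm{ran}\,P_K$, which also has dimension $K$, hence $M=P_K$.

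The main obstacle is stating uniqueness at the right level, namely uniqueness of the spectral projector versus uniqueness of the maximizer, and flagging the $\lambda_K=0$ corner case.
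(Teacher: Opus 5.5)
Your proof is correct and follows the paper's argument: dominate $M$ by the projector onto its range, use $C\succeq 0$ to get $\Tr(MC)\le\Tr(\Pi C)$, and apply Ky Fan, with uniqueness read (as the paper does) as uniqueness of the top-$K$ spectral subspace under the gap. Your equality-case analysis showing $P_K$ is the only maximizing effect goes beyond what the paper proves and is sound. One correction: the $\lambda_K=0$ corner case you flag cannot occur when $K<d$, because $\lambda_K>\lambda_{K+1}\ge 0$ already forces $\lambda_K>0$.
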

\begin{proof}
See Appendix~\ref{app:proof-hard-support}.
\end{proof}
Since $C$ is a density operator, $\sum_{j=1}^d\lambda_j=1$.
Thus, the quantum analogue of a retained-variance target is a retained-mass target $\alpha\in(0,1]$, for which the hard rank is chosen as
\begin{equation}
K_\alpha
\coloneqq
\min\!\left\{
K:\sum_{j=1}^K\lambda_j\ge \alpha
\right\}.
\end{equation}
This hard choice accepts the first $K_\alpha$ eigenspaces completely and rejects the remaining eigenspaces completely.

Proposition~\ref{prop:hard-support} identifies the standard principal subspace with the optimal hard detector.  The next section replaces this discontinuous hard acceptor by a smooth, optimization-derived measurement effect.

\section{Soft Spectral Detector}
\label{sec:regularized-detector}
The hard projector $P_K$ makes a binary decision on the eigenspaces of $C$: each eigenspace is either fully accepted or fully rejected. It is sensitive to eigenvalue crossings and becomes unstable when the spectral gap near the cutoff is small.
A well-behaved detector should remain a valid effect, increase monotonically with the eigenvalues of $C$, and admit a principled optimization formulation. We derive such a detector by introducing a regularized optimization problem.

We first replace the step response in the binary decision with a smooth monotone response function.
\begin{definition}[Optimization-admissible response]
\label{def:admissible}
A scalar response $f\colon\R\to(0,1)$ is \emph{optimization-admissible} if:
\begin{enumerate}[label=\roman*.,leftmargin=2em]
    \item $f\in C^1(\R)$ and $f'(z)>0$ for every $z\in\R$;
    \item $\lim_{z\to-\infty}f(z)=0$ and $\lim_{z\to+\infty}f(z)=1$;
    \item the quantile function $f^{-1}$ belongs to $L^1(0,1)$;
    \item $\int_0^1f^{-1}(m)\,dm=0$.
\end{enumerate}
The associated generator is
\begin{equation}
\label{eq:phi-f}
    \phi_f(m)\coloneqq \int_0^m f^{-1}(u)\,du,
    \qquad m\in(0,1),
\end{equation}
with continuous extension to $[0,1]$.
\end{definition}

The last condition only fixes the location convention for the response.  Logistic, Gaussian, and other smooth thresholds can be placed in this form after centering.

Fix $T>0$ and $\mu\in\R$.  Define the regularized spectral detector as the optimizer of
\begin{equation}
\label{eq:regularized-objective}
    \min_{0\preceq M\preceq I}
    \left\{
    \mu\Tr(M)-\Tr(MC)+T\Tr\phi_f(M)
    \right\}.
\end{equation}
Here $\mu$ plays the role of a spectral threshold, or equivalently, a Lagrange parameter controlling the detector size, while the temperature parameter $T$ controls the smoothness of the transition. 
The term $-\Tr(MC)$ rewards normal acceptance, $\mu\Tr(M)$ penalizes detector size, and $T\Tr\phi_f(M)$ smooths the occupation pattern.

\begin{theorem}[Soft Spectral Detector]
\label{thm:rsd}
Let $C\succeq0$, $T>0$, and let $f$ be optimization-admissible.  Problem~\eqref{eq:regularized-objective} has the unique solution
\begin{equation}
\label{eq:RSD-filter}
    M^f_{\mu,T}=f\!\left(\frac{C-\mu I}{T}\right).
\end{equation}
Equivalently,
\begin{equation}
\label{eq:RSD-spectrum}
    M^f_{\mu,T}
    =\sum_{j=1}^d
    f\!\left(\frac{\lambda_j-\mu}{T}\right)\ketbra{u_j}.
\end{equation}
\end{theorem}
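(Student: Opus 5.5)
The plan is to treat \eqref{eq:regularized-objective} as a strictly convex spectral optimization problem and reduce it to independent scalar problems along the eigenbasis of $C$.

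\emph{Properties of the generator.} By Definition~\ref{def:admissible}, $\phi_f'(m)=f^{-1}(m)$ on $(0,1)$. This derivative is continuous and strictly increasing, since $f$ is a $C^1$ strictly increasing bijection $\R\to(0,1)$. Hence $\phi_f$ is strictly convex on $[0,1]$. It is also finite at the endpoints: $\phi_f(0)=0$, and $\phi_f(1)=0$ by condition iv, with finiteness guaranteed by condition iii. Because $\phi_f$ is convex and continuous on $[0,1]$, the map $M\mapsto\Tr\phi_f(M)$ is convex on the effect set; this is the standard trace-function convexity, via Klein's/Peierls' inequality. I expect strict convexity to carry over as well, because a strictly convex scalar function gives a strictly convex trace function on Hermitian matrices. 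The remaining terms $\mu\Tr(M)-\Tr(MC)$ are linear in $M$. The feasible set $\{0\preceq M\preceq I\}$ is compact and convex, and the objective is continuous, since $\phi_f$ is continuous on $[0,1]$. So a minimizer exists, and strict convexity makes it unique.

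\emph{Identifying the minimizer.} Write $M$ in its own eigenbasis as $M=\sum_k m_k\ketbra{v_k}$, and set $G(M)=\Tr\bigl(M(\mu I-C)\bigr)+T\Tr\phi_f(M)$. The second term depends only on the spectrum $(m_k)$. For the linear term, von Neumann's trace inequality (or rearrangement) bounds $\Tr(MC)$ above by $\sum_k m_k^{\downarrow}\lambda_k$. Equality holds when $M$ is diagonal in the eigenbasis $\{u_j\}$ of $C$ with eigenvalues ordered like those of $C$. It therefore suffices to minimize over $M=\sum_j m_j\ketbra{u_j}$. The problem then separates into $d$ scalar problems
\begin{equation*}
\min_{m\in[0,1]}\; g_j(m),\qquad g_j(m)\coloneqq(\mu-\lambda_j)m+T\phi_f(m).
\end{equation*}
Here $g_j'(m)=\mu-\lambda_j+Tf^{-1}(m)$. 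This derivative tends to $-\infty$ as $m\to0^+$ and to $+\infty$ as $m\to1^-$, so the minimum is interior. It is attained at the unique zero $f^{-1}(m)=(\lambda_j-\mu)/T$, that is, $m_j=f((\lambda_j-\mu)/T)$. These $m_j$ are ordered like the $\lambda_j$ because $f$ is monotone, so the equality case of the rearrangement step is consistent. This yields \eqref{eq:RSD-spectrum}, and by functional calculus \eqref{eq:RSD-filter}. The resulting operator is basis-independent even when $C$ has degenerate eigenvalues, since it is a function of $C$.

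\emph{Main obstacle.} The delicate step is the reduction to the eigenbasis of $C$ together with the uniqueness claim. If strict convexity of $\Tr\phi_f$ is accepted, uniqueness is immediate. Otherwise I would argue directly: equality in von Neumann's inequality forces $M$ to commute with $C$ with co-ordered spectrum, and the scalar minimizers are unique, which pins $M$ down on each eigenspace of $C$. As an alternative that avoids rearrangement, I would check the first-order condition. The gradient of $G$ is $\mu I-C+Tf^{-1}(M)$, which vanishes exactly at the candidate $M^f_{\mu,T}$, an interior point of the feasible set. Convexity of $G$ then certifies global optimality.
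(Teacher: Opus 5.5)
Your proposal is correct. The convexity skeleton matches the paper: properties of $\phi_f$, convexity of $M\mapsto\Tr\phi_f(M)$, and uniqueness from strict convexity. Your main identification step, however, follows a different route.

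The paper argues that since $\phi_f'=f^{-1}$ diverges at $0$ and $1$, the optimum lies in $0\prec M\prec I$. It then writes the Fr\'echet stationarity condition $\mu I-C+T\phi_f'(M)=0$ and inverts it by functional calculus.

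You instead use von Neumann's trace inequality to show $G(M)\ge G(M')$, where $M'$ is the co-ordered matrix diagonal in the eigenbasis of $C$. This collapses the problem to $d$ independent one-variable problems $g_j$. The gain is a fully scalar argument:
\begin{itemize}
\item You need neither Fr\'echet differentiability of the trace functional nor an operator-level justification of interiority. The paper supports interiority only by the scalar blow-up of $\phi_f'$.
\item You add existence via compactness, which the paper leaves implicit.
\item The equality case of von Neumann's inequality gives uniqueness without invoking strict convexity of $\Tr\phi_f$. That is a true, standard fact, but both you and the paper merely assert it.
\end{itemize}

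The paper's route is shorter and gives the formula $f((C-\mu I)/T)$ in one line. Your fallback checks that $\mu I-C+Tf^{-1}(M)$ vanishes at the interior candidate and lets convexity certify global optimality. This is the paper's computation run in the sufficiency direction rather than the necessity direction. It is slightly cleaner, since it needs neither existence nor a priori interiority of the minimizer.
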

\begin{proof}
See Appendix~\ref{app:proof-rsd}.
\end{proof}

Let $\Pi(C>\mu I)$ denote the projection onto the eigenspace of $C-\mu I$ with strictly positive eigenvalues, and let $\Pi(C=\mu I)$ denote the projection onto the zero eigenspace of $C-\mu I$.

\begin{proposition}[Sharp-resolution limit]
\label{prop:sharp-resolution}
For fixed $\mu$,
\begin{equation}
\label{eq:sharp-limit}
    \lim_{T\to0^+}M^f_{\mu,T}
    =\Pi(C>\mu I)+f(0)\Pi(C=\mu I).
\end{equation}
In particular, if $\lambda_K>\mu>\lambda_{K+1}$, then $M^f_{\mu,T}\to P_K$.
\end{proposition}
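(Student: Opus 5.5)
The plan is to work entirely in the eigenbasis of $C$, using the spectral form~\eqref{eq:RSD-spectrum} from Theorem~\ref{thm:rsd}. For fixed $\mu$ and $T>0$,
\begin{equation*}
M^f_{\mu,T}=\sum_{j=1}^d f\!\left(\frac{\lambda_j-\mu}{T}\right)\ketbra{u_j},
\end{equation*}
and the projectors $\ketbra{u_j}$ do not depend on $T$. Since $\mathcal H$ is finite-dimensional, convergence in any operator norm is equivalent to entrywise convergence in a fixed basis. The problem therefore reduces to the $d$ scalar limits $\lim_{T\to0^+}f((\lambda_j-\mu)/T)$.

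Each index $j$ falls into one of three cases. If $\lambda_j>\mu$, then $(\lambda_j-\mu)/T\to+\infty$, and condition (ii) of Definition~\ref{def:admissible} gives the limit $1$. If $\lambda_j<\mu$, the argument tends to $-\infty$ and the limit is $0$. If $\lambda_j=\mu$, the argument is identically $0$ and the value is $f(0)$ for every $T$. Summing the rank-one projectors over each class gives the three spectral projectors. The indices with $\lambda_j>\mu$ assemble into $\Pi(C>\mu I)$, and those with $\lambda_j=\mu$ into $\Pi(C=\mu I)$. This identification holds for any choice of orthonormal eigenbasis within degenerate eigenspaces, because the spectral projections of $C-\mu I$ are basis-independent. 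Since there are finitely many terms, the limit passes through the sum, which yields~\eqref{eq:sharp-limit}. Monotonicity of $f$ is not even needed here; only the limits at $\pm\infty$ are used.

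For the special case, assume $\lambda_K>\mu>\lambda_{K+1}$. By the ordering in~\eqref{eq:C-spectrum}, $\lambda_j\ge\lambda_K>\mu$ for all $j\le K$ and $\lambda_j\le\lambda_{K+1}<\mu$ for all $j>K$. Hence $\Pi(C=\mu I)=0$ and $\Pi(C>\mu I)=\sum_{j\le K}\ketbra{u_j}=P_K$. (If $K=d$, the condition on $\lambda_{K+1}$ is vacuous and the same argument applies.) The strict gap also makes $P_K$ the unique top-$K$ projector, consistent with Proposition~\ref{prop:hard-support}.

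There is no real obstacle in this argument. The only point needing care is the bookkeeping in the degenerate case: we must confirm that grouping eigenvectors by eigenvalue class reproduces the basis-independent spectral projections. We must also note that the limit holds in norm, which follows from the finite dimension. Optionally, one can give a rate: with gap $g=\min_{\lambda_j\neq\mu}|\lambda_j-\mu|$, the operator-norm error is at most $\max\{1-f(g/T),\,f(-g/T)\}$.
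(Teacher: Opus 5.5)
Your proposal is correct and follows essentially the same route as the paper's proof. Both expand $M^f_{\mu,T}$ in the eigenbasis of $C$, sort the eigenvalues into $\lambda_j>\mu$, $\lambda_j<\mu$, $\lambda_j=\mu$, and apply the boundary limits of Definition~\ref{def:admissible}, together with $f(0)$ on the $\mu$-eigenspace. Your added remarks on basis-independence of the grouped projectors, norm convergence in finite dimension, and the optional rate bound make explicit what the paper leaves implicit; note that the rate bound, unlike the limit itself, does use the monotonicity of $f$.
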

\begin{proof}
See Appendix~\ref{app:proof-sharp-resolution}.
\end{proof}

This parameterized family assigns smooth acceptance weights to the eigencomponents of $C$ at finite resolution, while recovering the hard spectral cutoff in the sharp limit.

\section{QSPADE statistics and calibration targets}

\label{sec:qsad-statistics}

We now turn the soft spectral detector into anomaly scores. 
We define the two QSPADE scores and show how the scalar threshold $\mu$ is empirically calibrated to match a desired retained-mass target (the continuous analogue), and how test states are scored based on this threshold.
The residual $Q$ statistic only needs a retained-mass threshold, calibrated directly with the average state~$C$.  A Hotelling-type statistic additionally needs spectral windows ranks, calibrated with the maximally mixed probe~$\tau_{\rm mm}=I/d$.

\subsection{Residual statistic $Q$ and retained-mass threshold}
\label{subsec:qsad-Q}

For fixed $f$ and $T$, define the retained-mass function
\begin{equation}
\label{eq:retained-variance-function}
    R^f_{C,T}(\mu)
    \coloneqq
    \Tr\!\left(M^f_{\mu,T} C\right)
    =
    \sum_{j=1}^d
    \lambda_j
    f\!\left(\frac{\lambda_j-\mu}{T}\right).
\end{equation}
Since $C=\bar\rho$ is a density operator and $f$ is strictly increasing with
limits $0$ and $1$, $R^f_{C,T}(\mu)$ decreases continuously from $1$ to $0$
as $\mu$ increases from $-\infty$ to $+\infty$.  Thus, for each retained-mass
target $\alpha\in(0,1)$, we choose the unique threshold $\mu_\alpha$ by
\begin{equation}
\label{eq:retained-target-equation}
    R^f_{C,T}(\mu_\alpha)=\alpha.
\end{equation}
The calibrated detector is defined as
\begin{equation}
\label{eq:calibrated-detector}
    M^f_\alpha\coloneqq M^f_{\mu_\alpha,T}.
\end{equation}
For a normalized test state $\sigma$, define the acceptance score and residual anomaly score under the retained detector
\begin{equation}
\label{eq:retained-score}
    S^f_\alpha(\sigma)\coloneqq \Tr(M^f_\alpha\sigma),
    \qquad
    Q^f_\alpha(\sigma)\coloneqq 1-S^f_\alpha(\sigma).
\end{equation}

The anomaly decision cutoff for $Q^f_\alpha$ is fixed in advance. Alternatively, to guarantee a specific false-alarm rate, the decision threshold can be empirically bounded using held-out scores, consistent with standard classical process monitoring.

\subsection{Hotelling's $\calT^{2}$ from soft spectral windows}
\label{subsec:qsad-T2}
A Hotelling-type score requires more information than the residual score: it must resolve how a test state is distributed within the part of the spectrum accepted as normal.
For this purpose, we introduce a finite sequence of thresholds and take differences between consecutive soft detectors.
These differences define soft spectral windows.
Each window collects contributions from eigencomponents lying mainly between two neighboring thresholds, with smooth boundary weights determined by the resolution $T$.
The number of windows, $L$, is chosen so that the retained windows cover the target retained mass.
Choose ordered thresholds
\begin{equation}
\label{eq:threshold-sequence}
    +\infty=\mu_0>\mu_1>\cdots>\mu_L,
\end{equation}
and set
\begin{equation}
    M_0\coloneqq0,
    \qquad
    M_\ell\coloneqq M^f_{\mu_\ell,T}
    \quad (\ell=1,\ldots,L).
\end{equation}
The window effects are defined as
\begin{equation}
\label{eq:window-effects}
    D_\ell\coloneqq M_\ell-M_{\ell-1}.
\end{equation}
They are positive semidefinite because lowering the threshold increases the spectral response.  If the final threshold is $-\infty$, then the windows sum to $I$.

For component-wise Hotelling behavior, these retained window boundaries should be rank-calibrated:
\begin{equation}
\label{eq:rank-window-calibration}
    \Tr(M^f_{\mu_\ell,T})=\ell,
    \qquad
    \ell=1,\ldots,L \quad (L \ll d).
\end{equation}
This condition places the threshold \(\mu_\ell\) at the \(\ell\)-th PCA rank position in the soft spectrum. Hence the window
\(D_\ell\)
is the soft analogue of the \(\ell\)-th principal component. 

For each window, define the retained window mass
\begin{equation}
\label{eq:window-denominators}
    v_\ell\coloneqq \Tr(D_\ell C),
\end{equation}
Since \(D_\ell\) is calibrated to represent one spectral component, \(v_\ell\) plays the role of the corresponding soft eigenvalue. 
The total number of retained windows $L$ can be chosen as the minimal integer $L$ satisfying
\begin{equation}
    \sum_{\ell=1}^L v_\ell\ge\alpha.
\end{equation}
For a test state $\sigma$, the test window score is
\begin{equation}
\label{eq:test-window-mass}
    p_\sigma(\ell)\coloneqq \Tr(D_\ell\sigma).
\end{equation}
For a ridge $\gamma\ge0$, define the soft leverage score
\begin{equation}
\label{eq:qsad-T2}
    \mathcal{T}^{2,f}_{L,\gamma}(\sigma)
   \coloneqq \sum_{\ell=1}^L
    \frac{p_\sigma(\ell)}{v_\ell+\gamma}.
\end{equation}

The key calibration distinction is that \(Q\) requires only one retained-mass threshold $\mu_{\alpha}$, whereas \(\mathcal T^2\) requires component-resolved windows in addition, corresponding to thresholds $\mu_\ell$ for $L$ windows. 

\begin{table*}[t]
\centering
\begin{tabular}{p{0.07\textwidth}p{0.08\textwidth}p{0.23\textwidth}p{0.15\textwidth}p{0.22\textwidth}p{0.18\textwidth}}
\toprule
Statistic & Default? & Calibrated Quantity & Required Probe & Recommended Method & Sample Complexity \\
\midrule
$Q$ & yes & retained-mass threshold $\mu_\alpha$ & $C$ & qumode (sweep) /\newline qubit (single-target) & $O(\eta^{-2}\log \delta^{-1})$ (\eqref{complexity1},~\eqref{eq:general-qubit-samples}) 
\\
$\mathcal{T}^{2}$ & optional & thresholds $\mu_\ell$ for $L$ windows & $C$, $\tau_{\mathrm{mm}}=I/d$ & qumode & $\widetilde{O}(L d \eta^{-2}\log \delta^{-1})$ for $L \ll d$ (\eqref{complexity2})\\
\bottomrule
\end{tabular}
\caption{Summary of QSPADE anomaly scores, required calibration, and sample complexities. The residual score $Q$ is the default one-class support test, whereas the Hotelling-type score $\mathcal{T}^{2}$ requires additional windows. Here $C$ is the average state, $\tau_{\mathrm{mm}}=I/d$ is the maximally mixed probe (with system dimension $d$), $\mu_\ell$ are thresholds for the $L$ windows, $\eta$ is the target additive accuracy, and $\delta$ is the failure probability.}
\label{tab:implementation-map}
\end{table*}

\begin{remark}[Boundary leverage in the soft residual]
In standard hard PCA, the $Q$-statistic is entirely insensitive to variance differences among the retained components, making $\mathcal{T}^2$ strictly necessary for leverage monitoring. In QSPADE, however, the soft residual $Q^f_\alpha$ applies a graded penalty $1-f((\lambda_j-\mu)/T)$ to the boundary modes. Because this fractional penalty naturally tracks the spectral decay in the transition region, the soft $Q$ statistic absorbs some of the variance-weighting functions of a traditional leverage score near the cutoff. Consequently, for many cases, the $Q$ statistic alone is sufficiently informative, relegating $\mathcal{T}^2$ to cases where anomalous leverage deep within the dominant support is explicitly suspected.
\end{remark}

\begin{proposition}[Sharp-limit correspondence with classical $Q$ and $\mathcal{T}^{2}$]
\label{prop:sharp-QT2}
Suppose that $C$ has nondegenerate positive eigenvalues in the retained range. We consider the limit $T\to0^+$, with thresholds placed between consecutive eigenvalues. Then, in this limit,
\begin{equation}
    D_j\to\ketbra{u_j}.
\end{equation}
For a pure test state $\sigma=\ketbra{\psi}$ with $\ket{\psi}=\sum_j a_j\ket{u_j}$, the residual statistic converges to
\begin{equation}
    Q^f_\alpha(\sigma)\to\sum_{j>L}|a_j|^2,
\end{equation}
and the soft leverage score converges to
\begin{equation}
    \mathcal{T}^{2,f}_{L,\gamma}(\sigma)
    \to
    \sum_{j=1}^L\frac{|a_j|^2}{\lambda_j+\gamma}.
\end{equation}
Thus, when $C$ is obtained from centered classical features, QSPADE recovers the classical PCA residual and Hotelling forms. 
\end{proposition}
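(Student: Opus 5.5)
The argument reduces everything to Proposition~\ref{prop:sharp-resolution} applied threshold by threshold, followed by linearity of the trace. All operators involved are functions of $C$, so they are diagonal in the eigenbasis $\{\ket{u_j}\}$, and convergence can be checked eigenvalue by eigenvalue; in finite dimension this gives operator-norm convergence. The standing assumption is made precise as follows: $\lambda_1>\lambda_2>\cdots>\lambda_L>\lambda_{L+1}$ with $\lambda_L>0$, and the thresholds satisfy $\lambda_\ell>\mu_\ell>\lambda_{\ell+1}$ for $\ell=1,\dots,L$. The residual threshold $\mu_\alpha$ is placed between $\lambda_L$ and $\lambda_{L+1}$, so that the residual detector has the same rank $L$ as the retained windows.

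\textbf{Step 1 (detectors).} Since no $\mu_\ell$ equals an eigenvalue, $\Pi(C=\mu_\ell I)=0$. Proposition~\ref{prop:sharp-resolution} then gives $M_\ell\to P_\ell=\sum_{j\le\ell}\ketbra{u_j}$ for $\ell\ge1$. The convention $M_0=0=P_0$ holds trivially, since $\mu_0=+\infty$.

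\textbf{Step 2 (windows).} Taking differences, $D_\ell=M_\ell-M_{\ell-1}\to P_\ell-P_{\ell-1}=\ketbra{u_\ell}$.

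\textbf{Step 3 (scores).} For $\sigma=\ketbra{\psi}$:
\begin{itemize}
\item $p_\sigma(\ell)=\Tr(D_\ell\sigma)\to|a_\ell|^2$;
\item $v_\ell=\Tr(D_\ell C)\to\lambda_\ell$;
\item $S^f_\alpha(\sigma)\to\Tr(P_L\sigma)=\sum_{j\le L}|a_j|^2$, hence $Q^f_\alpha(\sigma)\to1-\sum_{j\le L}|a_j|^2=\sum_{j>L}|a_j|^2$, using $\|\psi\|=1$.
\end{itemize}
Each term of $\mathcal T^{2,f}_{L,\gamma}$ is a continuous function of $(p_\sigma(\ell),v_\ell)$ wherever $v_\ell+\gamma>0$. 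The limit $\lambda_\ell+\gamma\ge\lambda_L>0$ holds for every $\gamma\ge0$, because the retained eigenvalues are positive. The finite sum therefore converges to $\sum_{\ell\le L}|a_\ell|^2/(\lambda_\ell+\gamma)$.

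\textbf{Main obstacle: calibration.} The delicate point is that the thresholds are calibrated, so they depend on $T$. The rank conditions \eqref{eq:rank-window-calibration} and the mass condition \eqref{eq:retained-target-equation} define $\mu_\ell(T)$ and $\mu_\alpha(T)$. Proposition~\ref{prop:sharp-resolution} is stated for fixed $\mu$, so I would state the result under the hypothesis above, with thresholds fixed strictly inside the spectral gaps. I would then add a short extension for moving thresholds.

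If $\mu(T)$ stays in a compact subinterval of $(\lambda_{\ell+1},\lambda_\ell)$, then $(\lambda_j-\mu(T))/T\to\pm\infty$ uniformly in $j$, with sign determined by whether $j\le\ell$. The limits $f(\pm\infty)\in\{0,1\}$ in Definition~\ref{def:admissible} give the same conclusion.

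For the rank-calibrated case, I would try the following. If $\mu_\ell(T)$ approached $\lambda_\ell$ or $\lambda_{\ell+1}$, the identity $\Tr M=\sum_j f((\lambda_j-\mu)/T)=\ell$ would be violated in the limit, because one term would tend to $f(0)$ or to a value in $[0,1]$ while the others tend to $0$ or $1$. This pins $\mu_\ell(T)$ strictly inside the gap. The argument needs some care when $f(0)$ is an integer offset, which it is not since $f(0)\in(0,1)$. It also requires a nondegenerate gap, which is exactly what the hypothesis supplies.

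The classical correspondence then follows by reading $\lambda_j,u_j$ as the spectrum of the embedded, centered covariance.
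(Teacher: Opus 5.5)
Your Steps 1--3 are the paper's proof: Proposition~\ref{prop:sharp-resolution} at each fixed threshold, then differences $D_\ell=M_\ell-M_{\ell-1}$, then linearity of the trace. You add useful care in placing $\mu_\alpha\in(\lambda_{L+1},\lambda_L)$ and in checking $\lambda_\ell+\gamma>0$. Under the statement's reading, with thresholds fixed between consecutive eigenvalues, that argument is complete.

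The step that fails is in your extension to rank-calibrated thresholds. This is exactly the part the paper only asserts (``rank-calibrated thresholds isolate consecutive nondegenerate eigenmodes''). The identity $\sum_j f((\lambda_j-\mu_\ell(T))/T)=\ell$ does \emph{not} pin $\mu_\ell(T)$ strictly inside the gap. If $\mu_\ell(T)\to\lambda_\ell$ while $(\lambda_\ell-\mu_\ell(T))/T\to+\infty$, the boundary term tends to $1$, the count is still $\ell$, and nothing is contradicted.

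This really happens for admissible responses with asymmetric tails. Take $f(x)\sim c|x|^{-a}$ as $x\to-\infty$ with $a>1$ (so $f^{-1}\in L^1(0,1)$), and $1-f(x)$ exponentially small as $x\to+\infty$, translated to meet the centering condition. At any fixed $\mu\in(\lambda_{\ell+1},\lambda_\ell)$, the excess $\sum_{j>\ell}f((\lambda_j-\mu)/T)=\Theta(T^a)$ beats the exponentially small deficit $\sum_{j\le\ell}[1-f((\lambda_j-\mu)/T)]$. Hence $\Tr M^f_{\mu,T}>\ell$, and monotonicity forces $\mu_\ell(T)\to\lambda_\ell$. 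For the symmetric Fermi--Dirac and Gaussian responses the threshold does tend to the midpoint of the gap, so your claim holds there.

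The conclusion you need survives if you argue about the limiting operator instead of the threshold.
\begin{itemize}
\item Pass to a subsequence along which $\mu_\ell(T)\to\bar\mu\in[-\infty,+\infty]$ and every $(\lambda_j-\mu_\ell(T))/T$ converges in $[-\infty,+\infty]$.
\item Indices with $\lambda_j\neq\bar\mu$ contribute $0$ or $1$ in the limit. The $m\ge0$ indices with $\lambda_j=\bar\mu$ share a common limit $c\in[0,1]$, so $\ell=\#\{j:\lambda_j>\bar\mu\}+mc$.
\item A value $c\in(0,1)$ would require that eigenvalue block to contain both positions $\ell$ and $\ell+1$, which $\lambda_\ell>\lambda_{\ell+1}$ excludes. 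Hence $c\in\{0,1\}$, and the limit operator is the top-$\ell$ spectral projector $P_\ell$.
\item Every subsequence gives the same limit, so $M_\ell(T)\to P_\ell$ and $D_\ell\to\ketbra{u_\ell}$ wherever $\mu_\ell(T)$ goes.
\end{itemize}

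Do not try to push this to the mass-calibrated $\mu_\alpha(T)$. There the weights are $\lambda_j$ rather than $1$, so the limiting identity is not integer-valued. If $\sum_{j<K}\lambda_j<\alpha<\sum_{j\le K}\lambda_j$ with $\lambda_K$ simple, one finds $(\lambda_K-\mu_\alpha(T))/T\to f^{-1}(c)$ with $c=(\alpha-\sum_{j<K}\lambda_j)/\lambda_K\in(0,1)$. Then $M^f_{\mu_\alpha(T),T}\to P_{K-1}+c\ketbra{u_K}$, which is not a projector. So your hypothesis that $\mu_\alpha$ is held fixed in $(\lambda_{L+1},\lambda_L)$ is genuinely needed for the $Q$ limit, not merely convenient.
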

\begin{proof}
See Appendix~\ref{app:proof-sharp-QT2}.
\end{proof}

\section{Quantum measurement implementations}
\label{sec:implementations}
The soft spectral detector $M^f_{\mu,T}$ can be accessed without explicit eigenvector reconstruction. Table~\ref{tab:implementation-map} summarizes the two calibration objects. The qumode implementation records a continuous spectral sample, making it naturally suited for global retained-mass sweeps and window-resolved $\mathcal{T}^{2}$ leverage. Conversely, the qubit implementation estimates a single target threshold via Hadamard tests for the residual $Q$ detector.

Throughout this section, we use \(C=\bar\rho\) as a Hamiltonian generator. 
This access can be supplied either by a Hamiltonian-simulation oracle or block-encoding for \(C\), or, in the sample-access setting, by density matrix exponentiation~\cite{Lloyd2014} (see also \cite{kimmel2017,go2025}) using copies of \(C\), which are prepared by sampling normal training states uniformly. 
Thus the qumode interaction \(e^{-i\hat p\otimes C/T_2}\) in~\eqref{eq:cv-evolution} and the qubit evolutions \(e^{it_\omega C/T}\) in~\eqref{eq:controlled-U-omega} are both implementations of the same underlying simulation primitive for \(C\). 

\subsection{Qumode spectral-threshold measurement}
\label{subsec:qumode-threshold}

We first present a quantum implementation of the soft detector $M^f_{\mu,T}$ via an ancillary qumode interacting with~$C$, followed by a classical threshold on the position outcome.

Since $f'>0$ and $f(+\infty)-f(-\infty)=1$, the derivative~$f'$ is a probability density function.  For $T_1>0$, define the reflected position density
\begin{equation}
\label{eq:reflected-density}
    g_{f,T_1}(q)\coloneqq \frac1{T_1}f'\!\left(-\frac{q}{T_1}\right).
\end{equation}
Prepare a control qumode in the position wavepacket
\begin{equation}
\label{eq:position-wavepacket}
    \ket{\psi_{f,T_1}}
    \coloneqq \int_{\R}\sqrt{g_{f,T_1}(q)}\ket q\,dq.
\end{equation}
The response \(f\) is chosen from the optimization-admissible class in
Definition~\ref{def:admissible}.  This choice is
not only a regularization choice: it also fixes the physical control
distribution through \(f'\).  Two useful examples are the Fermi--Dirac (logistic) response, which connects to entropy-regularized soft spectral measurements~\cite{LiuWilde2026,Yuan2026QPCA}, and
the Gaussian (probit) response, whose control qumode is a
standard Gaussian state~\cite{Weedbrook2012Gaussian} and connects to the Gaussian error function activation observable from \cite{HeLiuWilde2026FDM}.  Their explicit densities and details are given in
Appendix~\ref{app:response-examples}.

Apply
\begin{equation}
\label{eq:cv-evolution}
    U_C\coloneqq e^{-i\hat p\otimes C/T_2},
    \qquad T_2>0,
\end{equation}
measure the position quadrature $\hat{q}$, and accept if the outcome $q>\beta$, where $\beta\in \mathbb{R}$ is chosen based on $\mu$ and $T$.

\begin{proposition}[General spectral-threshold measurement]
\label{prop:cv-threshold}
The above procedure implements the binary POVM effect
\begin{equation}
\label{eq:cv-effect}
    E^f_{\beta,T_1,T_2}
    \coloneqq f\!\left(\frac{C-\beta T_2 I}{T_1T_2}\right).
\end{equation}
Equivalently, with
\begin{equation}
    T\coloneqq T_1T_2,
    \qquad
    \mu\coloneqq \beta T_2,
\end{equation}
one has $E^f_{\beta,T_1,T_2}=M^f_{\mu,T}$.
\end{proposition}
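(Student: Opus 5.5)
The plan is to compute the POVM effect directly in the eigenbasis of $C$ from \eqref{eq:C-spectrum}. The interaction $U_C=e^{-i\hat p\otimes C/T_2}$ is block diagonal: on the eigenspace spanned by $\ket{u_j}$ it acts on the qumode as $e^{-i\hat p\,\lambda_j/T_2}$. This is a position displacement by $\lambda_j/T_2$, i.e.\ $e^{-i\hat p x}\ket q=\ket{q+x}$ in the convention $[\hat q,\hat p]=i$.

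First, for a system state $\rho$ and the control state $\ket{\psi_{f,T_1}}$ of \eqref{eq:position-wavepacket}, I will write the joint output state $U_C(\ketbra{\psi_{f,T_1}}\otimes\rho)U_C^\dagger$. I then compute the probability of the event $q>\beta$ as $\Tr[(\Pi_{q>\beta}\otimes I)\,U_C(\cdots)U_C^\dagger]$. Moving the unitaries onto the measurement operator gives the effect
\begin{equation*}
E=\bra{\psi_{f,T_1}}U_C^\dagger(\Pi_{q>\beta}\otimes I)U_C\ket{\psi_{f,T_1}}.
\end{equation*}
Because $U_C$ commutes with $I\otimes\ketbra{u_j}$, $E$ is diagonal in the $\ket{u_j}$ basis. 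Its $j$-th eigenvalue is the probability that the displaced position $q+\lambda_j/T_2$ exceeds $\beta$, where $q$ is distributed with density $g_{f,T_1}$. The only point needing care is that $\Pi_{q>\beta}$ is diagonal in position, so no cross terms between different $q$ survive. Phases from $\sqrt{g}$ are irrelevant since $\sqrt g$ is real and nonnegative.

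Second, I evaluate the scalar integral
\begin{equation*}
\int_{\beta-\lambda_j/T_2}^{\infty}\frac{1}{T_1}f'\!\left(-\frac{q}{T_1}\right)dq .
\end{equation*}
Substituting $z=-q/T_1$ turns it into $\int_{-\infty}^{(\lambda_j/T_2-\beta)/T_1}f'(z)\,dz$. By Definition~\ref{def:admissible}(ii), $f(-\infty)=0$, so this equals $f\big((\lambda_j-\beta T_2)/(T_1T_2)\big)$. The same fact, together with $f(+\infty)=1$, confirms that $g_{f,T_1}$ is normalized and hence that $\ket{\psi_{f,T_1}}$ is a valid state.

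Third, I reassemble the eigenvalues by functional calculus: $E=\sum_j f\big((\lambda_j-\beta T_2)/(T_1T_2)\big)\ketbra{u_j}$, which is \eqref{eq:cv-effect}. Setting $T=T_1T_2$ and $\mu=\beta T_2$ then matches \eqref{eq:RSD-spectrum} of Theorem~\ref{thm:rsd}, giving $E=M^f_{\mu,T}$.

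The main obstacle is bookkeeping of signs and conventions: the direction of the displacement generated by $e^{-i\hat p x}$, and the reflection $q\mapsto -q$ built into $g_{f,T_1}$. These must combine so that larger $\lambda_j$ gives larger acceptance. I will fix the convention explicitly at the start and check the monotonicity in $\lambda_j$ as a sanity test.
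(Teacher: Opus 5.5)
Your proposal is correct and follows the paper's proof essentially step for step. Both work in the eigenbasis of $C$, use that $e^{-i\hat p\lambda_j/T_2}$ shifts the position density to $g_{f,T_1}(q-\lambda_j/T_2)$, evaluate the tail probability with the substitution $z=-q/T_1$ to get $f\bigl((\lambda_j-\beta T_2)/(T_1T_2)\bigr)$, and reassemble by functional calculus; your explicit Heisenberg-picture effect $\bra{\psi_{f,T_1}}U_C^\dagger(\Pi_{q>\beta}\otimes I)U_C\ket{\psi_{f,T_1}}$ and the normalization check on $g_{f,T_1}$ just spell out steps the paper leaves implicit.
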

\begin{proof}
See Appendix~\ref{app:proof-cv-threshold}.
\end{proof}

The shift $\mu I$ is not simulated in the circuit. Once $U_C$ and the control width are fixed, changing $\mu$ only changes the classical threshold $\beta$ applied to the measured position.

\subsection{Qumode calibration for $Q$ and $\calT^{2}$}
\label{subsec:qumode-calibration}

We now explain how the qumode-based detector introduced in Section~\ref{subsec:qumode-threshold} is calibrated for $Q$ and $\calT^{2}$.

For the residual statistic $Q$, prepare the average state~$C$.  With the circuit in Proposition~\ref{prop:cv-threshold}, define
\begin{equation}
\label{eq:cv-tail-retained}
    G_C(\beta)\coloneqq \Pr(q>\beta\mid C).
\end{equation}
Then
\begin{equation}
\label{eq:retained-born}
    G_C(\beta)
    =\Tr(E^f_{\beta,T_1,T_2} C)
    =\Tr(M^f_{\beta T_2,T_1T_2} C).
\end{equation}
Thus a retained-mass target $\alpha$ is calibrated by
\begin{equation}
    G_C(\beta_\alpha)=\alpha,
    \qquad
    \mu_\alpha=\beta_\alpha T_2.
\end{equation}
Because the threshold $\beta$ is applied in post-processing, the same recorded position samples can be reused to calibrate any number of targets $\alpha$. This global sweep procedure is summarized in Algorithm~\ref{alg:all-target-retained}.

\begin{algorithm}[H]
\caption{Qumode retained-mass calibration for residual $Q$}
\label{alg:all-target-retained}
\begin{algorithmic}[1]
\Require Average state $C$, response $f$, control scales $T_1,T_2$, successful calibration shots $S$, target set $\mathcal A\subset(0,1)$.
\State Prepare $\ket{\psi_{f,T_1}}$ with density~\eqref{eq:reflected-density}.
\For{$s=1,\ldots,S$}
    \State Prepare $C$, apply $U_C=e^{-i\hat p\otimes C/T_2}$, and record the position outcome $q_s$.
\EndFor
\State Form the empirical tail function
\[
    \widehat G_S(\beta)=\frac1S\sum_{s=1}^S\mathbf{1}\{q_s>\beta\}.
\]
\For{each $\alpha\in\mathcal A$}
    \State Choose $\widehat\beta_\alpha$ satisfying $\widehat G_S(\widehat\beta_\alpha)\approx\alpha$ and set $\widehat\mu_\alpha=\widehat\beta_\alpha T_2$.
\EndFor
\State \Return thresholds $\{\widehat\mu_\alpha:\alpha\in\mathcal A\}$.
\end{algorithmic}
\end{algorithm}

Uniformly over $\beta$, the Dvoretzky--Kiefer--Wolfowitz inequality~\cite{Dvoretzky1956} gives
\begin{equation}
\label{eq:retained-DKW-samples}
    \Pr\left(\sup_\beta|\widehat G_S(\beta)-G_C(\beta)|\le\eta\right)\ge1-\delta
\end{equation}
if
\begin{equation}
    \label{complexity1}
    S\ge\frac1{2\eta^2}\log\frac2\delta .
\end{equation}
Thus, the retained-mass curve is learned to additive accuracy $\eta$ with $O(\eta^{-2}\log\delta^{-1})$ samples, independent of $d$.

For $\calT^{2}$, the same qumode measurement circuit is used to construct component-resolved windows $D_\ell$.
The maximally mixed probe
\(
    \tau_{\rm mm}\coloneqq \frac{I}{d}
\)
gives
\begin{equation}
\label{eq:rank-calibration-born}
\begin{aligned}
    \Pr(q>\beta\mid\tau_{\rm mm})
    & =\Tr(E^f_{\beta,T_1,T_2}\tau_{\rm mm})\\
    & =\frac1d\Tr(M^f_{\beta T_2,T_1T_2}).
    \end{aligned}
\end{equation}
Thus, the rank-calibrated thresholds are obtained from
\begin{equation}
\label{eq:rank-quantile-equation}
    \Pr(q>\beta_\ell\mid\tau_{\rm mm})=\frac{\ell}{d}.
\end{equation}
For windows $W_\ell=(\beta_\ell,\beta_{\ell-1}]$, the Hotelling denominator is then the retained window mass
\begin{equation}
\label{eq:window-denominator-estimates}
    v_\ell=\Pr(q\in W_\ell\mid C)
    =\Tr(D_\ell C).
\end{equation}

Algorithm~\ref{alg:T2-score} below gives the full Hotelling-type scoring procedure: the maximally mixed probe fixes component-resolved windows, and the average state \(C\) supplies the denominators.

\begin{algorithm}[H]
\caption{Qumode Hotelling-type score}
\label{alg:T2-score}
\begin{algorithmic}[1]
\Require Same fixed qumode circuit as Algorithm~\ref{alg:all-target-retained}; average state \(C\); test state \(\sigma\); maximally mixed state \(\tau_{\mathrm{mm}}=I/d\); retained-mass target \(\alpha\); candidate window number \(L_{\max}\); ridge \(\gamma\ge0\).
\State Use \(\tau_{\mathrm{mm}}\) to choose thresholds \(\widehat\beta_\ell\) satisfying
\[
    \widehat{\Pr}(q>\widehat\beta_\ell\mid\tau_{\rm mm})
    \approx \frac{\ell}{d},
    \qquad \ell=1,\ldots,L_{\max}.
\]
\State Define windows \(W_\ell=(\widehat\beta_\ell,\widehat\beta_{\ell-1}]\), with \(\widehat\beta_0=+\infty\).
\State Estimate retained window masses
\[
    \widehat v_\ell=\widehat{\Pr}(q\in W_\ell\mid C),
    \qquad \ell=1,\ldots,L_{\max}.
\]
\State Choose
\[
    \widehat L=\min\!\left\{L':\sum_{\ell=1}^{L'}\widehat v_\ell\ge\alpha\right\}.
\]
\State Estimate test-window masses
\[
    \widehat p_\sigma(\ell)=\widehat{\Pr}(q\in W_\ell\mid\sigma),
    \qquad \ell=1,\ldots,\widehat L.
\]
\State Output
\[
    \widehat{\mathcal T}^{2,f}_{\widehat L,\gamma}(\sigma)
    =
    \sum_{\ell=1}^{\widehat L}
    \frac{\widehat p_\sigma(\ell)}
         {\widehat v_\ell+\gamma}.
\]
\end{algorithmic}
\end{algorithm}

To calibrate the thresholds $\mu_\ell$ for the retained windows to an additive accuracy $\eta$ for the rank level, the required probability accuracy in the $\tau_{\mathrm{mm}}$ probe is $\eta/d$. Because we restrict Hotelling leverage monitoring to $L \ll d$ principal windows, the relevant binomial success probability for these top modes under the maximally mixed probe is small, $p \approx L/d$. Applying Bernstein's inequality~\cite{Boucheron2013} leverages the reduced variance $p(1-p) \approx L/d$, yielding a significantly sharper sample complexity:
\begin{equation}
    \label{complexity2}
    S_{\calT^{2}}
    =
    O\!\left(
    \left(
    \frac{L d}{\eta^2}
    +
    \frac{d}{\eta}
    \right)
    \log\frac{1}{\delta}
    \right).
\end{equation}
Thus, by exploiting the finite-window truncation $L \ll d$, QSPADE breaks the naive $\widetilde{O}(d^2)$ worst-case scaling, reducing the component-resolution rank calibration overhead to $\widetilde{O}(Ld)$. 

\subsection{Single-target qubit calibration for residual $Q$}
\label{subsec:qubit-Q}

When the retained-mass target $\alpha$ is fixed in advance, it is unnecessary to collect a full qumode distribution. A qubit route can calibrate the single scalar threshold~$\mu_\alpha$ by directly evaluating the nonlinear spectral filter \( \Tr[f((C-\mu I)/T) \rho ] \) using a hybrid Hadamard-test estimator adapted from Algorithm~2 of~\cite{HeLiuWilde2026FDM}. 

Since a standard Hadamard test~\cite{Cleve1998} evaluates expectation values of the form \( \Tr(e^{iHt} \rho ) \), we bridge the gap to the nonlinear response $f$ by representing $f(x)$ as a bounded superposition of phase functions \(e^{itx}\) on the relevant spectral interval (Definition~\ref{def:hadamard-estimable-response}). By substituting the operator argument into this phase expansion, evaluating the target filter decomposes into estimating a sequence of unitary terms $e^{it(C-\mu I)/T}$.

Because the scalar shift $\mu I$  commutes with $C$,  the evolution $e^{it(C-\mu I)/T}$ trivially factors as
\begin{equation}
\label{eq:qubit-control-shift-identity}
    e^{it(C-\mu I)/T}
    =
    e^{-i\mu t/T}e^{itC/T}.
\end{equation}
Thus the data register only needs controlled simulations of $C$; changing the threshold $\mu$ only changes a scalar phase on the control qubit.

\begin{definition}[Hadamard-estimable response]
\label{def:hadamard-estimable-response}
Fix a compact interval $\calJ\subset\R$.  A response $f\colon\R\to(0,1)$ is called $(A_f,\varepsilon_f)$-Hadamard-estimable on $\calJ$ if there exist a real constant $b_f$, a constant $A_f>0$, a probability space~$\Omega_f$, real frequencies $t_\omega\in\R$, and coefficients $\chi_\omega\in\C$ with $|\chi_\omega|\le1$ such that
\begin{equation}
\label{eq:hadamard-representation}
    \widetilde f(x)
    \coloneqq 
    b_f+
    A_f\,\EE_{\omega\sim\Omega_f}
    \Re\!\left[\chi_\omega e^{it_\omega x}\right]
\end{equation}
satisfies
\begin{equation}
\label{eq:hadamard-representation-error}
    \sup_{x\in\calJ}|f(x)-\widetilde f(x)|\le \varepsilon_f.
\end{equation}
\end{definition}

The interval $\calJ$ is chosen so that $(\lambda-\mu)/T\in\calJ$ for all $\lambda\in\spec(C)$ and all thresholds $\mu$ considered during calibration.  Since $C=\bar\rho$ is a density operator, $\spec(C)\subseteq[0,1]$; hence for a calibration interval $[\mu_{\min},\mu_{\max}]$ one may take
\begin{equation}
\label{eq:qubit-safe-J}
    \calJ
    \coloneqq 
    \left[
    -\frac{\mu_{\max}}{T},
    \frac{1-\mu_{\min}}{T}
    \right],
\end{equation}
or a smaller interval if tighter spectral bounds are known.   
By casting $f$ into this phase-superposition form, the evaluation of the nonlinear filter translates directly into random sampling over controlled-unitary evolutions. 
Representations of the form in~\eqref{eq:hadamard-representation} arise from Fourier approximations of \(f\) or other truncated phase expansions. They naturally accommodate quantum signal processing (QSP) polynomials~\cite{LowChuang2017QSP,Gilyen2019} or linear combinations of unitaries (LCU)~\cite{Berry2015LCU}. The quantity $A_f$ captures the overall sampling prefactor or block-encoding norm of the chosen representation.

To construct the single-shot estimator $Y_\mu(\rho)$ for the target filter \( \Tr[f((C-\mu I)/T) \rho ] \), the process is as follows:

\begin{enumerate}
    \item Sample $\omega\sim\Omega_f$ and write $\chi_\omega=|\chi_\omega|e^{i\varphi_\omega}$ (choosing an arbitrary phase $\varphi_\omega$ if $\chi_\omega=0$).
    \item Prepare the data register in state $\rho$ and a control qubit in $\ket{0}$.  Apply a Hadamard gate to the control, apply the controlled unitary
    \begin{equation}
    \label{eq:controlled-U-omega}
        U_\omega\coloneqq e^{it_\omega C/T}
    \end{equation}
    on the data register, apply the phase
    \begin{equation}
    \label{eq:hadamard-control-phase}
        \theta_\omega(\mu)
        \coloneqq 
        \varphi_\omega-\frac{\mu t_\omega}{T}
    \end{equation} 
    to the control qubit.
    \item Measure the control qubit in the $X$ basis to obtain a signed outcome $Z_\omega(\rho)\in\{-1,+1\}$.
\end{enumerate}

Conditioned on the sampled frequency $\omega$, the expected value of this measurement outcome exactly recovers the required real part:
\begin{equation}
\EE[Z_\omega(\rho)\mid\omega] = \Re[e^{i\theta_\omega(\mu)}\Tr(U_\omega\rho)].  
\end{equation}
Scaling and translating this outcome yields the final single-shot estimator:
\begin{equation}
\label{eq:single-shot-Y}
    Y_\mu(\rho)
    \coloneqq 
    b_f+A_f|\chi_\omega|Z_\omega(\rho).
\end{equation}

\begin{proposition}[Control-shifted Hadamard estimator]
\label{prop:control-shifted-estimator}
Suppose that $f$ is $(A_f,\varepsilon_f)$-Hadamard-estimable on $\mathcal{J}$ and that $C$ can be simulated for the times $t_\omega/T$. For every density operator $\rho$ and every $\mu$ satisfying $\left(\spec(C)-\mu\right)/T\subseteq\mathcal{J}$, the single-shot estimator $Y_\mu(\rho)$ defined in~\eqref{eq:single-shot-Y} satisfies
\begin{equation}
\label{eq:general-estimator-bias}
    \left|
    \EE\,Y_\mu(\rho)
    -
    \Tr\!\left[
    f\!\left(\frac{C-\mu I}{T}\right) \rho 
    \right]
    \right|
    \le \varepsilon_f.
\end{equation}
Moreover, the estimator is strictly bounded, $Y_\mu(\rho)\in[b_f-A_f,b_f+A_f]$.
\end{proposition}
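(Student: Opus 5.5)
The plan is to compute the expectation of $Y_\mu(\rho)$ by the tower rule, first conditioning on the sampled index $\omega$ and then averaging over $\Omega_f$. After that, the phase representation of Definition~\ref{def:hadamard-estimable-response} is lifted to the operator level through the spectral decomposition~\eqref{eq:C-spectrum}.

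\emph{Conditional expectation.} Conditioned on $\omega$, the Hadamard test with controlled $U_\omega$ and control phase $\theta_\omega(\mu)$ gives $\EE[Z_\omega(\rho)\mid\omega]=\Re[e^{i\theta_\omega(\mu)}\Tr(U_\omega\rho)]$. This is the standard Hadamard-test computation: the control qubit's off-diagonal coherence after the controlled unitary equals $\Tr(U_\omega\rho)$ times the applied phase, and the $X$-basis expectation reads off its real part. Next I use~\eqref{eq:qubit-control-shift-identity} together with $|\chi_\omega|e^{i\theta_\omega(\mu)}=\chi_\omega e^{-i\mu t_\omega/T}$. This gives
\[
|\chi_\omega|\,\EE[Z_\omega(\rho)\mid\omega]=\Re\,\Tr\!\left[\chi_\omega e^{it_\omega(C-\mu I)/T}\rho\right]=\sum_j \langle u_j|\rho|u_j\rangle\,\Re\!\left[\chi_\omega e^{it_\omega x_j}\right],
\]
where $x_j\coloneqq(\lambda_j-\mu)/T$. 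The weights $r_j\coloneqq\langle u_j|\rho|u_j\rangle$ are real, nonnegative, and sum to one, so they can be pulled outside the real part.

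\emph{Averaging over $\omega$.} Taking the expectation over $\omega\sim\Omega_f$ and exchanging it with the finite sum over $j$ gives
\[
\EE\,Y_\mu(\rho)=b_f+A_f\sum_j r_j\,\EE_\omega\Re[\chi_\omega e^{it_\omega x_j}]=\sum_j r_j\,\widetilde f(x_j),
\]
where the last equality uses $\sum_j r_j=1$ to absorb $b_f$. The exchange is justified because the integrand is bounded by $1$. Since $\Tr[f((C-\mu I)/T)\rho]=\sum_j r_j f(x_j)$, the bias equals $\sum_j r_j\,(\widetilde f(x_j)-f(x_j))$. The hypothesis $(\spec(C)-\mu)/T\subseteq\calJ$ puts every $x_j$ in $\calJ$, so~\eqref{eq:hadamard-representation-error} bounds each term by $\varepsilon_f$. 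The convex combination therefore gives~\eqref{eq:general-estimator-bias}.

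\emph{Boundedness.} This is immediate: $Z_\omega\in\{\pm1\}$ and $|\chi_\omega|\le1$, so $Y_\mu(\rho)\in[b_f-A_f,b_f+A_f]$.

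The only delicate step is the first one. There I must verify the sign and phase conventions, namely that applying the phase $e^{i\theta}$ to the $\ket1$ branch after controlled-$U_\omega$ yields $\Re[e^{i\theta}\Tr(U_\omega\rho)]$ rather than its conjugate. I would check this by writing the control state explicitly. If the sign came out conjugated, I would note that $\Re$ is invariant under conjugation only when applied jointly, so the fix is to apply the opposite phase. This is a convention choice, and the bound is unaffected.
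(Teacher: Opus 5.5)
Your proposal is correct and follows essentially the same route as the paper. You condition on $\omega$, combine the Hadamard-test identity with the factorization $e^{it_\omega(C-\mu I)/T}=e^{-i\mu t_\omega/T}e^{it_\omega C/T}$ to obtain $\EE\,Y_\mu(\rho)=\Tr[\rho\,\widetilde f((C-\mu I)/T)]$, and then apply the uniform error bound on $\calJ$. Your eigenbasis convex-combination step is the explicit form of the paper's operator-norm bound. The phase convention you flag also works out, since putting $e^{i\theta}$ on the $\ket{1}$ branch gives $\langle X\rangle=\Re[e^{i\theta}\Tr(U_\omega\rho)]$ directly.
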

\begin{proof}
See Appendix~\ref{app:proof-control-shifted-estimator}.
\end{proof}

For fixed $\mu$ and $\rho$, averaging $m$ independent copies of $Y_\mu(\rho)$ gives, by Hoeffding's inequality~\cite{Hoeffding1963}, additive error at most $\eta+\varepsilon_f$ with probability at least $1-\delta$ using
\begin{equation}
\label{eq:general-qubit-samples}
    m=
    O\!\left(
    \frac{A_f^2}{\eta^2}\log\frac1\delta
    \right)
\end{equation}
Hadamard-test samples, multiplied by the average or worst-case cost of simulating $e^{it_\omega C/T}$ over the sampled frequencies. 

For residual calibration, the probe state is the average state $C=\bar\rho$.  Operationally, this means sampling a normal training state uniformly.  With $\rho=C$, Proposition~\ref{prop:control-shifted-estimator} gives noisy observations of
\begin{equation}
\label{eq:qubit-retained-estimate-target}
    R^f_{C,T}(\mu)
    =
    \Tr\!\left[f\!\left(\frac{C-\mu I}{T}\right)
    C 
    \right].
\end{equation}
The threshold $\mu_\alpha$ is then found by Robbins--Monro stochastic approximation~\cite{RobbinsMonro1951} applied to the monotone equation
\begin{equation}
    R^f_{C,T}(\mu)=\alpha.
\end{equation}

\begin{algorithm}[H]
\caption{Single-target qubit calibration for residual~$Q$}
\label{alg:single-target-rm}
\begin{algorithmic}[1]
\Require Hadamard-estimable response $f$, resolution $T$, target retained mass $\alpha\in(0,1)$, interval $[\mu_{\min},\mu_{\max}]$ containing the solution, initial $\mu_0\in[\mu_{\min},\mu_{\max}]$, step sizes $\eta_n$, average state $C=\bar\rho$, mini-batch sizes $m_n\ge1$, number of iterations $N$.
\For{$n=0,1,\ldots,N-1$}
    \State Draw independent samples
    \[
        Y_{\mu_n}^{(1)}(C),\ldots,Y_{\mu_n}^{(m_n)}(C)
    \]
    using~\eqref{eq:single-shot-Y}.
    \State Form
    \[
        \widehat R_n(\mu_n)
        =
        \frac1{m_n}\sum_{s=1}^{m_n}Y_{\mu_n}^{(s)}(C).
    \]
    \State Update
    \[
        \mu_{n+1}
        =
        \Pi_{[\mu_{\min},\mu_{\max}]}
        \left(
        \mu_n+\eta_n(\widehat R_n(\mu_n)-\alpha)
        \right).
    \]
\EndFor
\State \Return $\widehat\mu_\alpha=\mu_N$.
\end{algorithmic}
\end{algorithm}

Algorithm~\ref{alg:single-target-rm} implements this stochastic calibration. The sign in the update is chosen because $R^f_{C,T}$ is decreasing: if the observed retained mass is larger than~$\alpha$, then the threshold is too low and must be increased. Under bounded conditional variance and the standard Robbins--Monro step-size conditions
\begin{equation}
    \sum_{n=1}^{\infty}\eta_n=\infty,
    \qquad
    \sum_{n=1}^{\infty}\eta_n^2<\infty
\end{equation}
satisfied, for instance, by an algebraically decaying sequence $\eta_n = c/n^\nu$ with $\nu \in (0.5, 1]$, the recursion converges almost surely to $\mu_\alpha$ as $N\to\infty$ when the estimator is unbiased.
A deterministic representation bias $\varepsilon_f$ moves the limiting point to an $O(\varepsilon_f/\kappa)$ neighborhood of $\mu_\alpha$ whenever the local slope satisfies $|(R^f_{C,T})'(\mu)|\ge\kappa>0$ around the solution. In practical quantum implementations subject to finite-shot binomial noise, the raw trajectory $\mu_n$ may exhibit oscillations. Applying Polyak--Ruppert averaging~\cite{Ruppert1988,PolyakJuditsky1992} $\bar{\mu}_N = \frac{1}{N}\sum_{n=1}^N \mu_n$ over the iterations provides a robust estimate of the true threshold.

\subsection{Deployment}
Regardless of the calibration route, once the calibrated threshold
$\widehat{\mu}_\alpha$ is obtained, the residual detector is fixed as $M^f_{\widehat{\mu}_\alpha,T}$. Deployment on a test state $\sigma$
estimates the acceptance score and the residual anomaly score,
\begin{equation}
    S^f_\alpha(\sigma)=\Tr(M^f_{\widehat{\mu}_\alpha,T}\sigma),
    \qquad
    Q^f_\alpha(\sigma)=1-S^f_\alpha(\sigma).
\end{equation}
The same measurement primitive is used with the calibration input replaced by
the test state $\sigma$.  For a binary qumode threshold measurement,
additive score accuracy $\eta$ requires
$O(\eta^{-2}\log\delta^{-1})$ shots.  For the qubit Hadamard route, the
corresponding sample count is
$O(A_f^2\eta^{-2}\log\delta^{-1})$, with the per-shot cost determined by
the simulation of $e^{it_\omega C/T}$ in the chosen representation.

If Hotelling leverage $\mathcal T^2$ is requested, the qumode implementation is the natural route: its continuous position output
$q$ can be binned into all calibrated windows from the same set of measurement records. The deployment on a test state $\sigma$ is included in Algorithm~\ref{alg:T2-score}.    

\section{Numerical experiments}
\label{sec:numerics}
We report two simulations that test the two intended uses of QSPADE.
The first uses classical data presented as quantum feature states and is a
consistency check: the QSPADE residual and leverage scores reproduce kernel-PCA-like nonlinear spectral geometry.
The second is the genuinely quantum-native setting, in which the normal object
is the mixed state $C=\bar\rho=N^{-1}\sum_i\ketbra{\psi_i}$ of a transverse-field Ising model.

\subsection{Classical data: the kernel-PCA behavior}
\label{subsec:expA}

\begin{figure}[t]
\centering
\includegraphics[width=\linewidth]{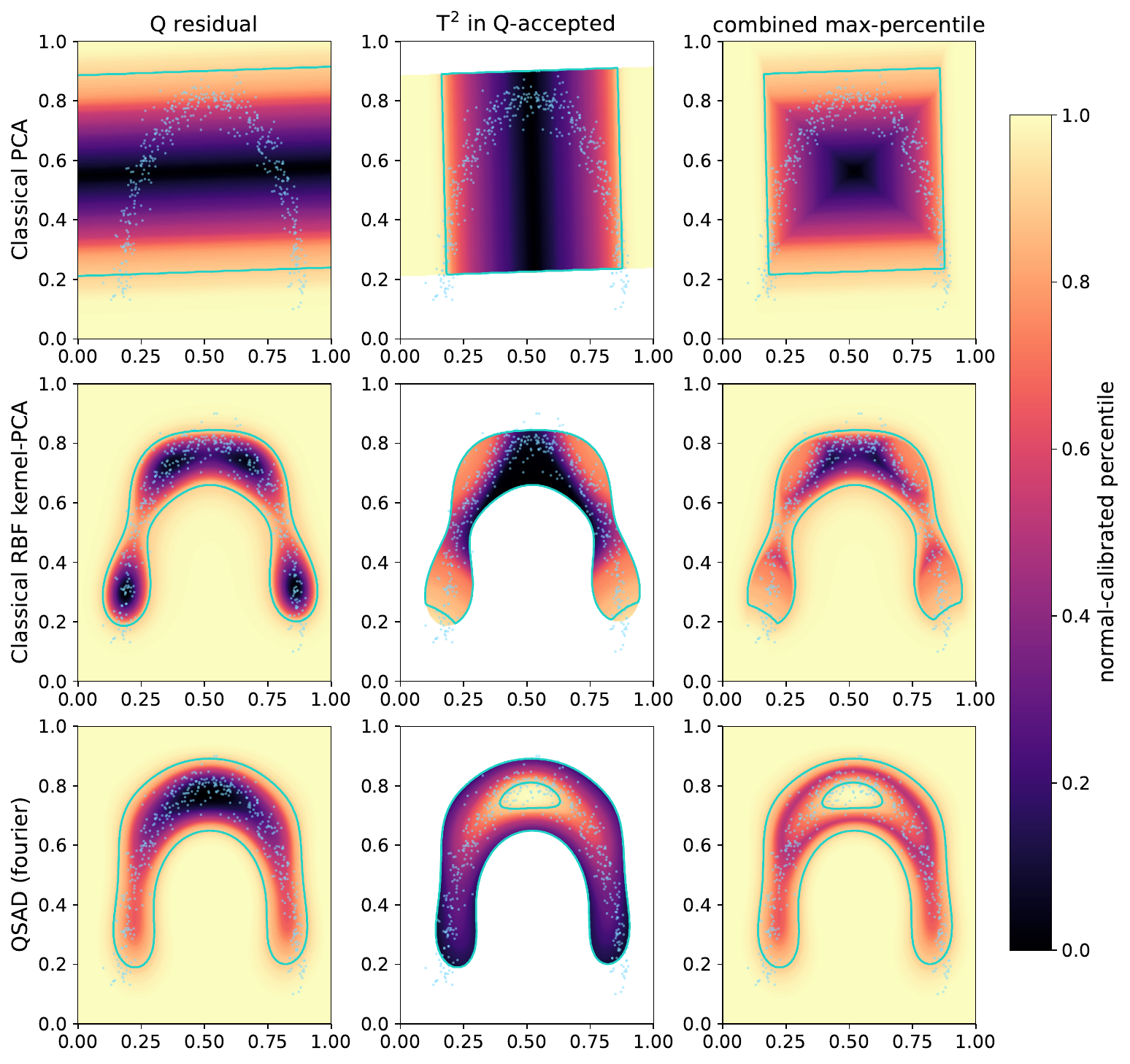}
\caption{Normal-calibrated percentile maps over
$[0,1]^2$ for the moon dataset; the pale scattered points are the normal training
data.  Rows: linear classical PCA, RBF kernel-PCA, and QSPADE with a Fourier feature map.  Columns: residual~$Q$, within-support
$\calT^{2}$ (shown inside the $Q$-accepted region), and the combined
max-percentile.  Linear PCA gives a rigid straight band; the kernel baseline and
QSPADE both produce curved acceptance regions that follow the data.  The cyan
contour marks the calibrated decision boundary.}
\label{fig:expA}
\end{figure}

The normal data in this numerical experiment are a one-class curved point cloud (a single ``moon'') in
$[0,1]^2$.  Each point $x$ is mapped to a normalized feature state
$\ket{\phi(x)}$ by a fixed feature map after classical centering.  
Because the QSPADE scores depend on the data only through the inner products
$\braket{\phi(x)}{\phi(x')}=k(x,x')$, QSPADE on classical data is a kernel method, and in the sharp-resolution limit, the residual $Q^f_\alpha$ coincides
with the kernel-PCA reconstruction error.  This experiment makes the
correspondence explicit and quantitative.

We compare three monitors on identical data: (i)~linear PCA in the raw
two-dimensional space (centered, single retained component); (ii)~standard
centered radial basis function (RBF) kernel-PCA, the canonical kernel one-class baseline, with the
median-heuristic bandwidth; and (iii)~QSPADE with a fixed representative feature
map, a four-qubit tensor-product Fourier map.  The feature
maps, kernels, and all simulation
parameters are specified in Appendix~\ref{app:setup}.  For each monitor, in Figure~\ref{fig:expA} we report
the residual $Q$, the within-support Hotelling $\calT^{2}$, and their combined
percentile, calibrated against the empirical distribution of normal scores. Linear PCA produces rigid, straight acceptance bands that cannot follow the
curvature of the data, so normal points near the extremes of the arc are
flagged as anomalous.  The RBF kernel-PCA and QSPADE both produce curved acceptance
regions that track the manifold; the boundaries are similar though not identical,
since they realize different kernels.

In Table~\ref{tab:expA}, we quantify detection accuracy by the area under the receiver-operating-%
characteristic curve (ROC-AUC), a threshold-independent score equal to $1$ for
perfect normal/anomaly separation and $1/2$ for chance.  Across a suite of
geometries that spans an elongated Gaussian blob (the
linear-correlation case), the moon, and a ring, linear PCA degrades from ROC-AUC $\approx0.89$ on the blob to
$\approx0.57$ on the ring, whereas both the kernel baseline and QSPADE remain in
the $0.82$--$0.93$ range (see also Figure~\ref{fig:expA-datasets} in Appendix~\ref{app:numerics}).
The agreement is
not specific to one kernel: a comparison across further classical kernels
(Laplacian and polynomial) and further feature maps shows the same
data-dependent behavior on both sides (Figure~\ref{fig:expA-kernels} in Appendix~\ref{app:numerics}).

\begin{table}[t]
\centering
\begin{tabular}{lccc}
\toprule
geometry & linear PCA & RBF kernel-PCA & \textbf{QSPADE (Fourier)} \\
\midrule
blob & $0.89$ & $0.93$ & \textbf{0.93} \\
moon & $0.66$ & $0.87$ & \textbf{0.88} \\
ring & $0.57$ & $0.82$ & \textbf{0.88} \\
\bottomrule
\end{tabular}
\caption{ROC-AUC for one-class detection on three classical geometries: an elongated Gaussian blob, the moon, and a ring. 
Linear PCA degrades on curved or non-convex dataset, while the RBF kernel-PCA baseline and QSPADE with a Fourier feature map remain comparable. 
QSPADE supplies a PCA-style one-class anomaly score for quantum-kernel methods: whenever a quantum feature map is advantageous for classical data, the same access model can be used for calibrated anomaly detection.}
\label{tab:expA}
\end{table}

\subsection{Quantum-native data: support across a phase transition}
\label{subsec:expB}

We now apply QSPADE to genuinely quantum-native data. This experiment tests whether a soft retained-mass support detector can monitor a quantum phase change when the normal state has a graded spectrum and no clear PCA rank.
We use the transverse-field Ising Hamiltonian
\begin{equation}
    H_{\rm TFIM}\coloneqq -J\sum_{i=1}^{n-1}Z_iZ_{i+1}-h\sum_{i=1}^{n}X_i,
\end{equation}
on $n=10$ spins with $J=1$, which has a quantum phase transition at the
critical field $h_c=J$ between an ordered (ferromagnetic) phase with long-range
$ZZ$ correlations and a disordered (paramagnetic) phase polarized along the
field.  

The normal source, in our setting, populates the low-energy ladder of the ordered
chain ($h_A=0.4$) with geometrically decreasing probabilities: each sample is
an eigenstate $\ket{m_j}$ of \(H_{\rm TFIM}(h_A)\) drawn with probability $w_j\propto2^{-j}$ over the
$M=12$ lowest modes, with a small random perturbation, so that
$C=N^{-1}\sum_i\ketbra{\psi_i}$, built from $N=600$ such samples, is a genuinely
mixed state with a graded
spectrum.  This thermal-like occupation makes the retained rank ambiguous by
construction: the empirical eigenvalues of $C$ (Fig.~\ref{fig:expB-spectrum})
decay geometrically from $0.52$ through the sampling resolution
$1/N\approx1.7\times10^{-3}$ into the noise floor, the expected per-mode
training counts $w_jN$, falling from $300$ to below one inside the ladder, and no
spectral gap singles out a correct number of components. 

Anomalies are instead
low-energy states of the same chain tuned to the paramagnetic phase at
$h_B=1.2$, drawn from a $0.7/0.2/0.1$
mixture of its three lowest states; we call states from this phase the anomaly
class. 
All reported scalar scores are averages over fresh samples from the corresponding source: the ordered-phase low-energy mixture for normal states, the paramagnetic mixture for the anomaly class, and the TFIM ground state at the displayed field values for the sweep experiment. The evaluation protocol is detailed in Appendix~\ref{app:setup}.

\begin{figure}[t]
\centering
\includegraphics[width=\linewidth]{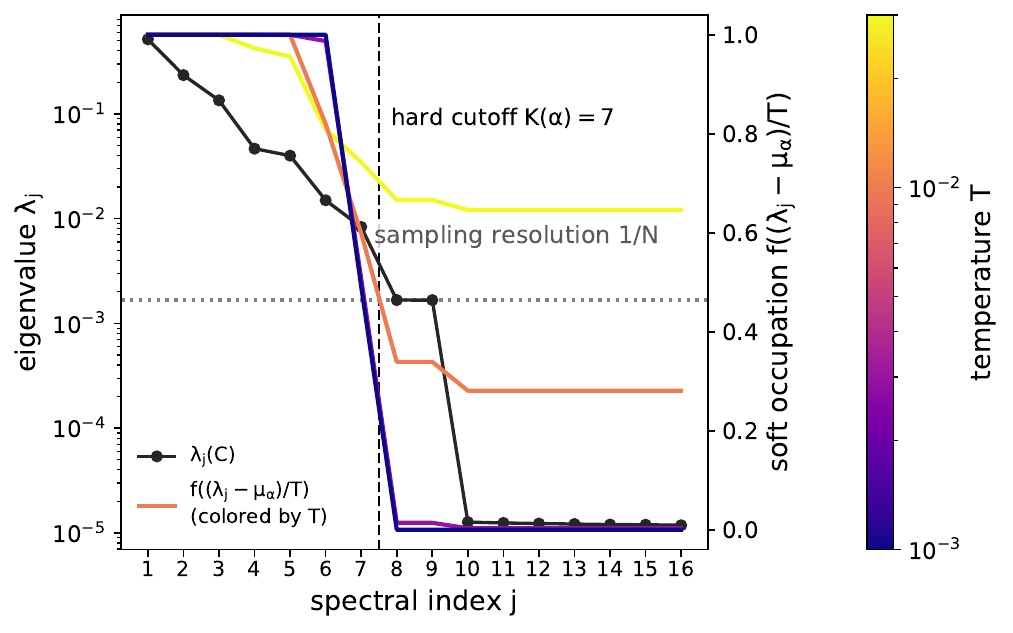}
\caption{Spectrum and soft occupations of the average normal state \(C\).
Black points show the eigenvalues \(\lambda_j(C)\) on a log scale, and the dotted line marks the sampling resolution \(1/N\) for \(N=600\).
The vertical dashed line marks the hard retained-mass rank \(K(\alpha)=7\) for \(\alpha=0.99\).
Colored curves, read on the right axis, show the soft occupations \(f((\lambda_j-\mu_\alpha)/T)\) for different resolutions \(T\).
The figure shows why a hard rank is ambiguous on this graded spectrum, while QSPADE assigns fractional weights to near-cutoff modes.}
\label{fig:expB-spectrum}
\end{figure}

We compare three residual anomaly scores: 
\begin{itemize}
    \item \textbf{Raw overlap residual:} 
    $Q_{\rm raw}(\psi)=1-\bra{\psi}C\ket{\psi}$, which is density-weighted.
    
    \item \textbf{Hard top-$K$ residual:} 
    $Q_{\rm hard}(\psi)=1-\bra{\psi}P_K\ket{\psi}$, with the rank $K$ to be chosen.
    
    \item \textbf{QSPADE residual:} 
    $Q^f_\alpha(\psi)=1-\bra{\psi}M^f_{\mu_\alpha,T}\ket{\psi}$, fixed by the single retained-mass target $\alpha=0.99$.
\end{itemize}
On a graded spectrum the rank choice that
hard PCA requires is ill-posed, in the sense that the data single out no
particular $K$: the retained-mass rule maps the reasonable
targets $\alpha\in\{0.95,0.98,0.99,0.995\}$ to four different ranks
$K=5,6,7,8$, and beyond $K=7$ the cutoff reaches the sampling resolution,
where the trailing empirical eigenvectors are no longer reliable.  The
retained-mass calibration instead places the threshold $\mu_\alpha$ inside the
graded tail, between $\lambda_7$ and $\lambda_8$, and the soft response retains
the boundary modes fractionally, a weighting that no projector can represent
(Fig.~\ref{fig:expB-spectrum}).  The resolution is chosen to resolve the
spectral scale of the tail: the scalar scores quoted below use
$T=3\times10^{-3}$, and the figures display a family of resolutions around it.

\begin{figure}[t]
\centering
\includegraphics[width=\linewidth]{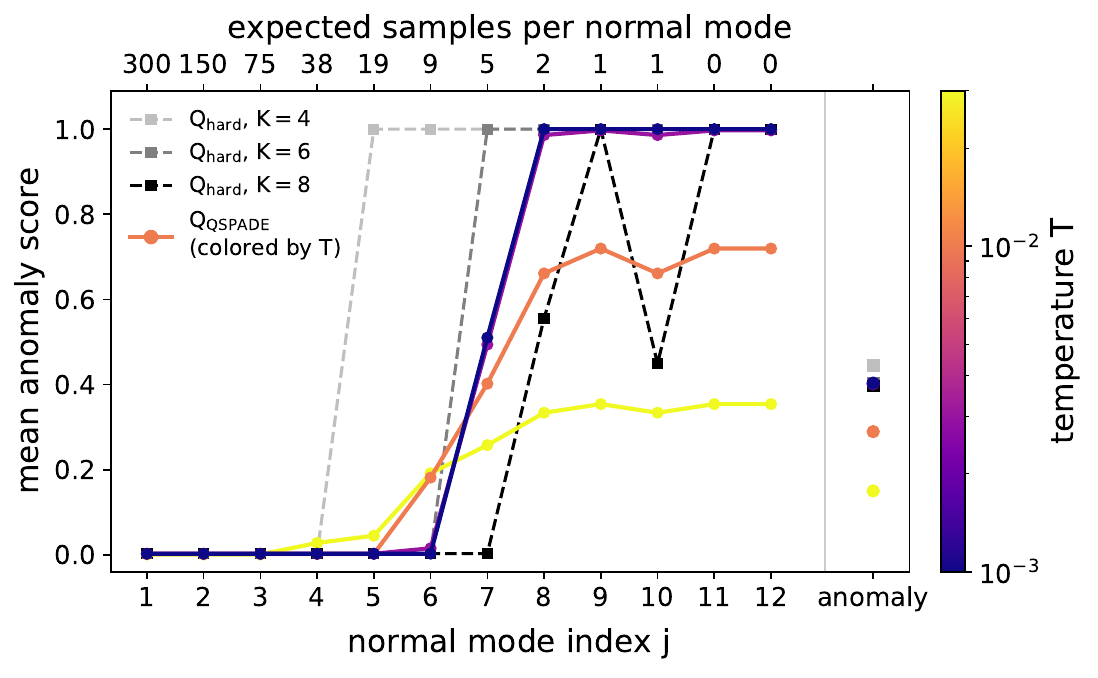}
\caption{Mean residual anomaly score for each normal mode \(j\), with the expected number of training samples \(w_jN\) shown on the top axis, and for the anomaly class in the rightmost column.
Grey dashed curves are hard top-\(K\) projectors; colored curves are QSPADE residuals at different resolutions \(T\).
Hard ranks flip whole normal modes between zero and unit score, and become unstable when the cutoff reaches the sampling-resolution regime (\(K=8\)).
QSPADE replaces this all-or-nothing behavior by a graded transition from well-sampled modes to unresolved tail modes.}
\label{fig:expB-modes}
\end{figure}

Figure~\ref{fig:expB-modes} shows the consequence of this rank ambiguity
mode by mode.  Each hard
detector is binary: for example, the mode $m_6$, a perfectly valid sector
carrying $w_6\approx1.6\%$ of the average state (nine
expected training samples), is scored fully anomalous at $K=4$ and fully normal at $K=6$,
so the choice of $K$ flips entire sectors between the two verdicts and the
score carries no information about how marginal a sector is.  Once the cutoff
reaches the sampling resolution ($K=8$, where $\lambda_8\approx1/N$) the hard
score mislabels individual
modes erratically, because the trailing eigenvectors mix directions whose
eigenvalues are degenerate at the noise level, so increasing $K$ is not a
remedy.  Instead, the soft residual, with one $\alpha$ and no rank choice,
grades the same ladder monotonically: well-sampled modes score near zero,
boundary modes take intermediate values that track their rarity, and only the
modes at or below the sampling resolution, which no data-driven detector could
certify, score as anomalous.

\begin{figure}[t]
\centering
\includegraphics[width=\linewidth]{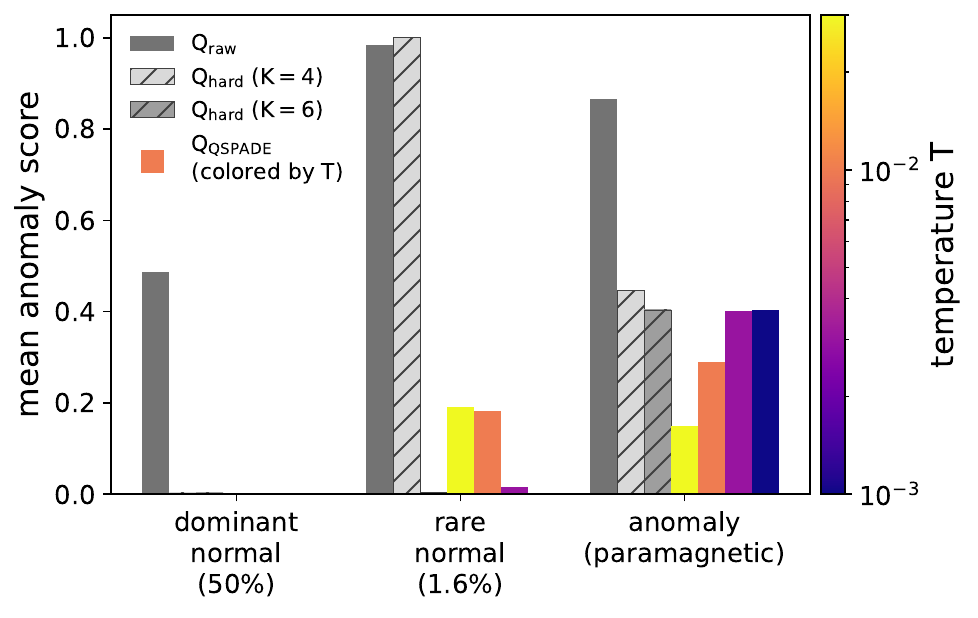}
\caption{Mean residual anomaly score for a dominant normal mode (\(50\%\)), a rare normal mode (\(1.6\%\)), and the paramagnetic anomaly class.
The density-weighted \(Q_{\rm raw}\) and the hard \(K=4\) projector score the rare valid mode above the anomaly, while the hard \(K=6\) projector gives it zero score and hides its rarity.
QSPADE assigns the rare mode an intermediate, temperature-dependent score; when \(T\) resolves the spectral tail, the rare valid mode remains below the anomaly. This shows soft scoring in QSPADE separates rarity within the normal support from genuine support mismatch.}
\label{fig:expB-sectors}
\end{figure}

\begin{figure}[t]
\centering
\includegraphics[width=\linewidth]{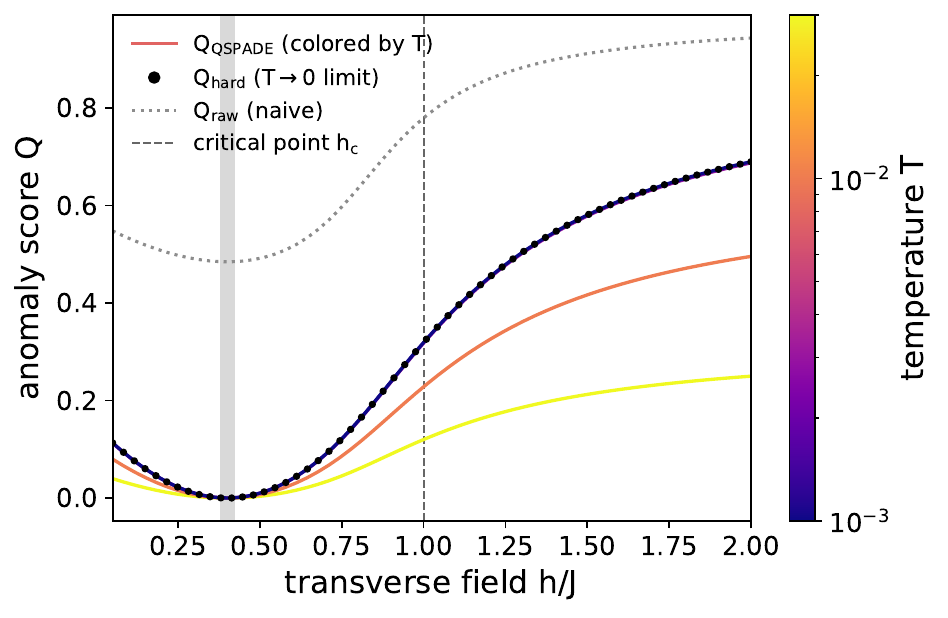}
\caption{QSPADE residual score of the TFIM ground state as the transverse field \(h\) is swept across the transition. 
The score is near zero at the normal training field \(h_A=0.4\) (grey line), stays low in the ordered phase, and rises through the critical point \(h_c=1\) (dashed) into the paramagnetic phase. 
Colored curves show different resolutions \(T\); as \(T\to0\) they approach the hard retained-rank projector \(K(\alpha)\) (black markers), while larger \(T\) gives a softer monitor. 
The density-weighted \(Q_{\rm raw}\) (dotted) is offset upward even at the normal field.}
\label{fig:expB-sweep}
\end{figure}

The rare-but-valid sector, the $1.6\%$ mode $m_6$ above, makes the
comparison concrete
(Fig.~\ref{fig:expB-sectors}).  The density-weighted $Q_{\rm raw}$ assigns it a mean score of $0.99$, above the genuine near-critical anomalies
at $0.86$; the hard projector at $K=4$ commits the same
inversion ($1.00$ against $0.45$), while at $K=6$, whose cutoff retains exactly the first six modes, it absorbs the sector
silently, with no trace of its rarity.  The calibrated $Q^f_\alpha$ keeps the
rare sector low ($0.02$ at $T=3\times10^{-3}$) and clearly below the anomaly
(at $0.40$), provided the resolution resolves the tail: for $T$
much larger than the spacing of the tail eigenvalues the detector over-retains
the tail and this contrast degrades.

This finite resolution and the smooth
occupation profile are precisely what the spectral-threshold measurement of Section~\ref{subsec:qumode-threshold} produces, fixed by one retained-mass
target rather than a discrete rank ladder, and the hard projector at the
retained-mass rank is recovered as its $T\to0$ endpoint.

Sweeping the transverse field across the critical point displays the whole soft
family at once (Fig.~\ref{fig:expB-sweep}).  The QSPADE residual is smallest
at the normal training field $h_A=0.4$, stays low throughout the ordered phase
($h<h_c$), and rises through the transition into the paramagnetic
phase ($h>h_c$), so anomaly detection tracks the quantum phase boundary.  The
family interpolates smoothly between a permissive
large-$T$ monitor and the hard projector recovered as $T\to0$.

\section{Conclusion}
\label{sec:conclusion}
We introduced QSPADE, a measurement-based framework for PCA-style anomaly detection with quantum data. 
The main contribution is to replace the reconstruction-first PCA workflow, including forming a covariance or Gram matrix, recovering principal eigenvectors, and selecting a hard rank, with directly getting an anomaly score from the average normal state \(C\). 
For the anomaly score, the residual \(Q\) score is the basic one-class detector and requires only retained-mass calibration, while the Hotelling-type \(\mathcal T^2\) score can be added when component-resolved leverage information is needed.

The soft spectral threshold is what makes this formulation different from standard PCA. 
Instead of accepting or rejecting whole eigenspaces by a discrete rank cutoff, QSPADE lets boundary spectral components contribute fractionally, with the hard-projector PCA score recovered in the zero-temperature limit \(T\to0\). 
This soft weighting reduces the sensitivity to eigenvalue crossings or small spectral gaps near the cutoff, since near-threshold modes are not forced into a binary rank decision.
It also avoids the need to resolve adjacent near-cutoff eigenvalues solely to decide a hard rank.
As a result, the default residual detector has dimension-independent measurement-shot complexity, while the optional leverage calibration is confined to the finite-window overhead \(\widetilde O(Ld)\).

The numerical experiments demonstrate the two intended roles of the framework. 
On encoded classical data, QSPADE reproduces kernel-PCA-like nonlinear decision regions. 
On quantum-native transverse-field Ising data, it detects the change across the phase transition without specifying an order parameter. 
These results support QSPADE as both a quantum-kernel anomaly detector and a data-driven monitor for quantum native systems.

Our results also raise concrete implementation questions. A natural next step is a small-scale hardware demonstration of QSPADE. 
The qumode construction is well matched to Gaussian responses and spectral sweeps, but the required qumode--qubit coupling may make the Hadamard-test route the more immediate option on qubit devices. 
At the circuit level, density matrix exponentiation gives a standard sample-access primitive, but tailored block-encodings or compiled simulations of \(C\) may reduce the overhead, which deserves further exploration. 
Finally, the same calibrated support viewpoint could be extended to multi-class quantum state discrimination, with separate learned spectral supports for different classes.

\section*{Acknowledgements}
MMW acknowledges support from the National Science Foundation under grant no.~2611810.
NL acknowledges funding from the
Science and Technology Commission of Shanghai Municipality (STCSM)
grant no.~24LZ1401200 (21JC1402900), NSFC grants no.~12471411 and
no.~12341104, the Shanghai Jiao Tong University 2030 Initiative, the Shanghai Pilot Program for Basic Research, 
and the Fundamental Research Funds for the Central Universities.

\bibliography{qspade_refs}

\appendix

\section{Proof of Proposition~\ref{prop:hard-support}}
\label{app:proof-hard-support}
\begin{proof}
For any effect $0\preceq M\preceq I$, let $Q_M$ be the orthogonal projector onto the range of $M$.  Since $0\preceq M\preceq Q_M$ and $\rank(Q_M)=\rank(M)\le K$,
\begin{equation}
    \Tr(MC)\le\Tr(CQ_M).
\end{equation}
Ky Fan's maximum principle~\cite{Bhatia1997} gives
\begin{equation}
    \Tr(CQ_M)\le\sum_{j=1}^K\lambda_j.
\end{equation}
Equality is achieved by $M=P_K$.  If $\lambda_K>\lambda_{K+1}$, the top-$K$ spectral subspace is unique.
\end{proof}

\section{Proof of Theorem~\ref{thm:rsd}}
\label{app:proof-rsd}

\begin{proof}
First, we establish various properties of the generator $\phi_f$. By the fundamental theorem of calculus and the inverse-function theorem, its derivatives are
\begin{equation}
    \phi_f'(m)=f^{-1}(m),
    \qquad
    \phi_f''(m)=\frac{1}{f'(f^{-1}(m))}>0.
\end{equation}
Since $f'>0$, the inverse $f^{-1}$ is strictly increasing, making $\phi_f$ strictly convex on $(0,1)$. Integrability gives a finite continuous extension at the endpoints. The definition yields $\phi_f(0)=0$, and the normalization condition in Definition~\ref{def:admissible} ensures $\phi_f(1)=\int_0^1f^{-1}(u)\,du=0$.

Consequently, the objective function in~\eqref{eq:regularized-objective} is strictly convex on the operator interval because $M\mapsto\Tr\phi_f(M)$ is strictly convex and the remaining terms are linear. Since $f$ maps $\mathbb{R}$ onto $(0,1)$, the derivative $\phi_f'(m)=f^{-1}(m)$ tends to $-\infty$ as $m\downarrow0$ and to $+\infty$ as $m\uparrow1$. Thus, the optimum must lie in the open interval $0\prec M\prec I$. 

The Fr\'echet first-order condition for the objective is
\begin{equation}
    \mu I-C+T\phi_f'(M)=0.
\end{equation}
Rearranging the terms yields
\begin{equation}
    \phi_f'(M)=\frac{C-\mu I}{T}.
\end{equation}
Applying the inverse derivative $(\phi_f')^{-1}=f$ via functional calculus gives~\eqref{eq:RSD-filter}. Strict convexity guarantees that this optimal solution is unique.
\end{proof}

\section{Proof of Proposition~\ref{prop:sharp-resolution}}
\label{app:proof-sharp-resolution}

\begin{proof}
For each eigenvalue $\lambda_j$, the scalar argument $\left(\lambda_j-\mu\right)/T$ tends to $+\infty$, $-\infty$, or $0$ depending on whether $\lambda_j>\mu$, $\lambda_j<\mu$, or $\lambda_j=\mu$, respectively.  The result follows from the boundary conditions in Definition~\ref{def:admissible}.
\end{proof}

\section{Proof of Proposition~\ref{prop:sharp-QT2}}
\label{app:proof-sharp-QT2}

\begin{proof}
By Proposition~\ref{prop:sharp-resolution}, each tail detector $M_\ell$ converges to the hard projector onto the modes above threshold $\mu_\ell$.  
Rank-calibrated thresholds isolate consecutive nondegenerate eigenmodes in the sharp limit, so that $D_j=M_j-M_{j-1}\to\ketbra{u_j}$.  
Substitution into~\eqref{eq:retained-score} and~\eqref{eq:qsad-T2} gives the residual and leverage limits, because $p_\sigma(j)\to |a_j|^2$ and $v_j=\Tr(D_j C)\to\lambda_j$.
\end{proof}

\section{Response examples for qumode spectral-threshold measurement}
\label{app:response-examples}

This appendix records two concrete optimization-admissible responses used in
the qumode construction of Section~\ref{subsec:qumode-threshold}.  In both
cases the symmetry \(f(-x)=1-f(x)\) implies
\(\int_0^1 f^{-1}(m)\,dm=0\), so the centering convention in
Definition~\ref{def:admissible} is satisfied.

\subsection{Fermi--Dirac response}

The increasing Fermi--Dirac, or logistic, response is
\begin{equation}
    f_{\rm FD}(x)\coloneqq \frac{1}{1+e^{-x}}.
\end{equation}
Its quantile and generator are
\begin{align}
    f_{\rm FD}^{-1}(m) & =\log\frac{m}{1-m},
    \,\\
    \phi_{\rm FD}(m)
    & =
    m\log m+(1-m)\log(1-m).
\end{align}
Its derivative is the logistic probability density
\begin{equation}
    f_{\rm FD}'(x)
    =
    \frac{e^{-x}}{(1+e^{-x})^2}
    =
    \frac14\,\operatorname{sech}^2\!\left(\frac{x}{2}\right).
\end{equation}
Therefore the reflected position density in~\eqref{eq:reflected-density} is
\begin{equation}
\label{eq:fd-reflected-density}
    g_{{\rm FD},T_1}(q)
    =
    \frac{1}{4T_1}
    \operatorname{sech}^2\!\left(\frac{q}{2T_1}\right),
\end{equation}
and the corresponding control wavepacket is
\begin{equation}
\label{eq:fd-wavepacket}
    \ket{\psi_{{\rm FD},T_1}}
    =
    \int_{\mathbb R}
    \frac{1}{2\sqrt{T_1}}
    \operatorname{sech}\!\left(\frac{q}{2T_1}\right)
    \ket q\,dq.
\end{equation}
With \(T=T_1T_2\) and \(\mu=\beta T_2\), Proposition~\ref{prop:cv-threshold}
gives the effect
\begin{equation}
    E^{{\rm FD}}_{\beta,T_1,T_2}
    =
    \left[
    I+\exp\!\left(-\frac{C-\mu I}{T}\right)
    \right]^{-1}.
\end{equation}

\subsection{Gaussian response}

The Gaussian, or probit, response is the standard normal cumulative
distribution function
\begin{equation}
    f_{\rm G}(x)=\Phi(x)
    \coloneqq 
    \frac12\left[
    1+\operatorname{erf}\!\left(\frac{x}{\sqrt2}\right)
    \right]
\end{equation}
where $\mathrm{erf}(\cdot)$ denotes the error function.
Its derivative is the standard normal density
\begin{equation}
    f_{\rm G}'(x)
    =
    \frac{1}{\sqrt{2\pi}}e^{-x^2/2}.
\end{equation}
The quantile and generator are
\begin{align}
    f_{\rm G}^{-1}(m) & =\Phi^{-1}(m),
    \,\\
    \phi_{\rm G}(m)
    & =
    -\frac{1}{\sqrt{2\pi}}
    \exp\!\left[-\frac12\bigl(\Phi^{-1}(m)\bigr)^2\right].
\end{align}
The reflected position density is the Gaussian density
\begin{equation}
\label{eq:gaussian-reflected-density}
    g_{{\rm G},T_1}(q)
    =
    \frac{1}{\sqrt{2\pi}T_1}
    \exp\!\left(-\frac{q^2}{2T_1^2}\right),
\end{equation}
so the control wavepacket is the Gaussian qumode state
\begin{equation}
\label{eq:gaussian-wavepacket}
    \ket{\psi_{{\rm G},T_1}}
    =
    \int_{\mathbb R}
    \frac{1}{(2\pi T_1^2)^{1/4}}
    \exp\!\left(-\frac{q^2}{4T_1^2}\right)
    \ket q\,dq.
\end{equation}
Again \(T=T_1T_2\) and \(\mu=\beta T_2\), and the implemented effect is
\begin{align}
    E^{{\rm G}}_{\beta,T_1,T_2}
    & =
    \Phi\!\left(\frac{C-\mu I}{T}\right)\\
    & =
    \frac12\left[
    I+
    \operatorname{erf}\!\left(
    \frac{C-\mu I}{\sqrt2\,T}
    \right)
    \right].
\end{align}

\section{Proof of Proposition~\ref{prop:cv-threshold}}
\label{app:proof-cv-threshold}

\begin{proof}
Let $C=\sum_j\lambda_j\ketbra{u_j}$.  Under $e^{-i\hat p\lambda_j/T_2}$, the position distribution is shifted by $\lambda_j/T_2$.  The density conditioned on eigenmode $\ket{u_j}$ is given by
\begin{equation}
    g_{f,T_1}\!\left(q-\frac{\lambda_j}{T_2}\right).
\end{equation}
The tail probability is
\begin{align}
    & \int_\beta^\infty
    \frac1{T_1}f'\!\left(-\frac{q-\lambda_j/T_2}{T_1}\right)dq \notag \\
    &=\int_{-\infty}^{(\lambda_j/T_2-\beta)/T_1}f'(z)\,dz \\
    &=f\!\left(\frac{\lambda_j-\beta T_2}{T_1T_2}\right).
\end{align}
Functional calculus gives~\eqref{eq:cv-effect}.
\end{proof}

\section{Proof of Proposition~\ref{prop:control-shifted-estimator}}
\label{app:proof-control-shifted-estimator}

\begin{proof}
For a sampled $\omega$, write $\chi_\omega=|\chi_\omega|e^{i\varphi_\omega}$.  The identity shift factors as
\begin{equation}
\label{eq:qubit-shift-factorization}
    e^{it_\omega(C-\mu I)/T}
    =e^{-i\mu t_\omega/T}e^{it_\omega C/T}.
\end{equation}
A Hadamard test with controlled unitary $e^{it_\omega C/T}$ and control phase
\begin{equation}
    \theta_\omega(\mu)=\varphi_\omega-\mu t_\omega/T
\end{equation}
outputs a sign variable $Z_\omega\in\{-1,+1\}$ satisfying
\begin{equation}
    \EE[Z_\omega\mid\omega]
    =
    \Re\!\left[
    e^{i\theta_\omega(\mu)}
    \Tr\!\left(e^{it_\omega C/T}\rho\right)
    \right].
\end{equation}
Set $Y_\mu(\rho)=b_f+A_f|\chi_\omega|Z_\omega$.  Taking the expectation over the measurement outcome from the Hadamard test and over $\omega$ gives
\begin{align}
    & \EE Y_\mu(\rho) \notag \\
    &=b_f+A_f\EE_\omega
    \Re\!\left[
    \chi_\omega e^{-i\mu t_\omega/T}
    \Tr\!\left(e^{it_\omega C/T}\rho\right)
    \right] \\
    &=\Tr\!\left[\rho\,\widetilde f\!\left(\frac{C-\mu I}{T}\right)\right].
\end{align}
The uniform scalar approximation error implies an operator-norm error at most $\varepsilon_f$, which proves~\eqref{eq:general-estimator-bias}.  The range bound follows from $|\chi_\omega|\le1$ and $|Z_\omega|=1$.
\end{proof}

\section{Supplementary numerical results}
\label{app:numerics}

This appendix collects supporting material for the experiments of
Section~\ref{sec:numerics}: the feature maps and simulation parameters, the
resolution--shot trade-off, and the kernel and geometry comparisons.

\subsection{Experiments setup}
\label{app:setup}

In the classical-data experiment of Section~\ref{subsec:expA}, each classical
point $x=(x_1,x_2)\in[0,1]^2$ is mapped to a feature state
$\ket{\phi(x)}$ by one of the maps below; we write $\tilde x_i=2x_i-1$.  All
maps produce a proper pure state of unit Euclidean norm: the proportionality sign
in $\phi\propto(\cdots)$ denotes this normalization (the amplitude vector is
divided by its norm).  The Fourier map is used in the main text as a fixed
representative; the comparison across feature maps and classical kernels appears
in Fig.~\ref{fig:expA-kernels}.
\begin{itemize}
\item \emph{Degree-two polynomial}:\\
  $\phi\propto(1,\tilde x_1,\tilde x_2,\tilde x_1^2,\tilde x_1\tilde x_2,\tilde x_2^2)$.
\item \emph{Fourier} (four qubits, $16$-dimensional): the tensor product over the
  two coordinates of the per-axis features
  $(\cos\pi t,\cos 2\pi t,\sin\pi t,\sin 2\pi t)$, $t\in\{x_1,x_2\}$.
\item \emph{Gaussian random features} (four qubits): $\phi\propto\cos(Wx+b)$ with
  $W_{ij}\sim\mathcal N(0,2\xi)$ and $b_i\sim\mathrm{Unif}[0,2\pi)$, a
  $16$-dimensional Monte-Carlo approximation of the RBF kernel
  $e^{-\xi\lVert x-x'\rVert^{2}}$.
\end{itemize}
The classical kernel-PCA
baselines use the RBF kernel $e^{-\xi\lVert x-x'\rVert^{2}}$ (the baseline of
the main text), the Laplacian
kernel $e^{-\xi_1\lVert x-x'\rVert_1}$ with
$\xi_1=\big(\mathrm{median}_{i,j}\lVert x_i-x_j\rVert_1\big)^{-1}$, and the
cubic polynomial kernel $(1+x\cdot x')^3$.

\begin{figure}[t]
\centering
\includegraphics[width=\linewidth]{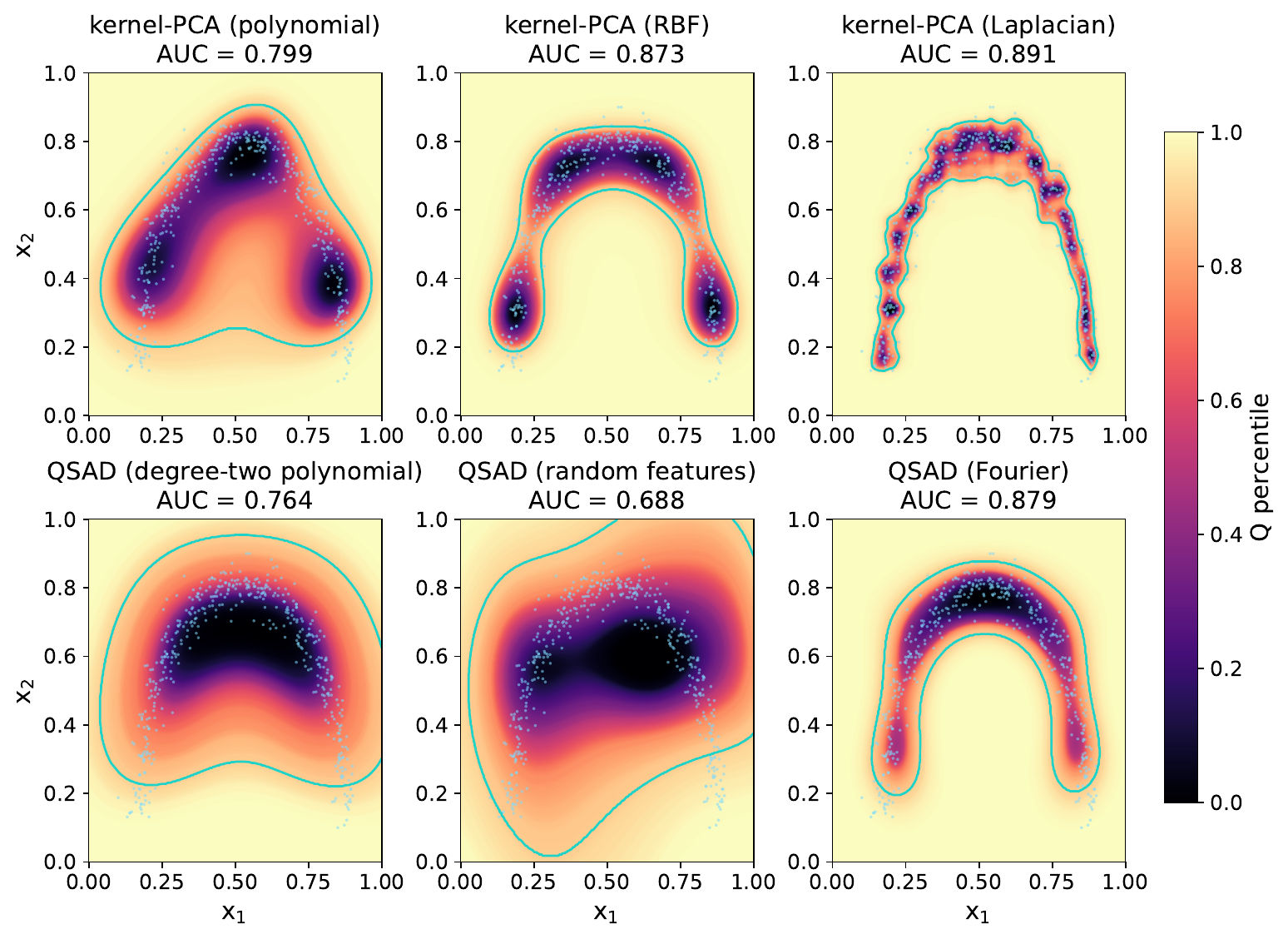}
\caption{Residual
percentile on the moon dataset for three classical kernel-PCA baselines (top row) and
QSPADE with three feature maps (bottom row), with the one-class detection AUC
(against a uniform background) in each panel title.}
\label{fig:expA-kernels}
\end{figure}

For the classical-data experiment, the normal clouds are a single moon (and, for
the geometry sweep, an elongated Gaussian blob and a ring), each with
$N=450$--$500$ points rescaled into $[0,1]^2$.  The RBF kernel-PCA bandwidth is set
by the median heuristic
$\xi=\big(2\,\mathrm{median}_{i,j}\lVert x_i-x_j\rVert^{2}\big)^{-1}$ and its
retained rank by the same explained-variance target as QSPADE's retained mass;
the QSPADE detector uses $\alpha=0.88$, $T=0.1$ in the three-method grid of
Fig.~\ref{fig:expA},
$\alpha=0.88$, $T=0.035$ in the kernel comparison of
Fig.~\ref{fig:expA-kernels}, and $\alpha=0.90$, $T=0.05$ in the geometry sweep
underlying Table~\ref{tab:expA}.  

For the quantum-native experiment, the normal ladder and anomalies are as in Section~\ref{subsec:expB} ($M=12$ modes
with weights $\propto2^{-j}$, $N=600$ training samples, perturbation amplitude
$0.05$), with $\alpha=0.99$ and, unless a family of resolutions
$T\in[10^{-3},3\times10^{-2}]$ is shown,
$T=3\times10^{-3}$.  The expected per-mode training counts $w_jN$ fall from
$300$ at $j=1$ to nine at $j=6$, about one at $j=9$, and below one for the last
modes.  Sector and per-mode scores are means over $300$ states sampled as in
training (a single normal mode per sector, the $0.7/0.2/0.1$ paramagnetic
mixture for the anomaly class); the detection AUC compares $600$
normal-mixture samples against $300$ anomalous ones; and the field sweep of
Fig.~\ref{fig:expB-sweep} tracks the single ground state at each field $h$.

All scores are converted
to normal-calibrated percentiles through the empirical cumulative distribution of
the normal scores; detection AUC uses uniform-background anomalies (classical) or
paramagnetic-phase states (quantum-native).  Centering of the training and
test data follows each
method's standard convention: linear PCA centers the inputs with the training-set mean, the kernel-PCA baselines center in feature space through the standard double-centering of the Gram matrix, and the feature maps act on the mean-centered data.

\begin{figure}[t]
\centering
\includegraphics[width=\linewidth]{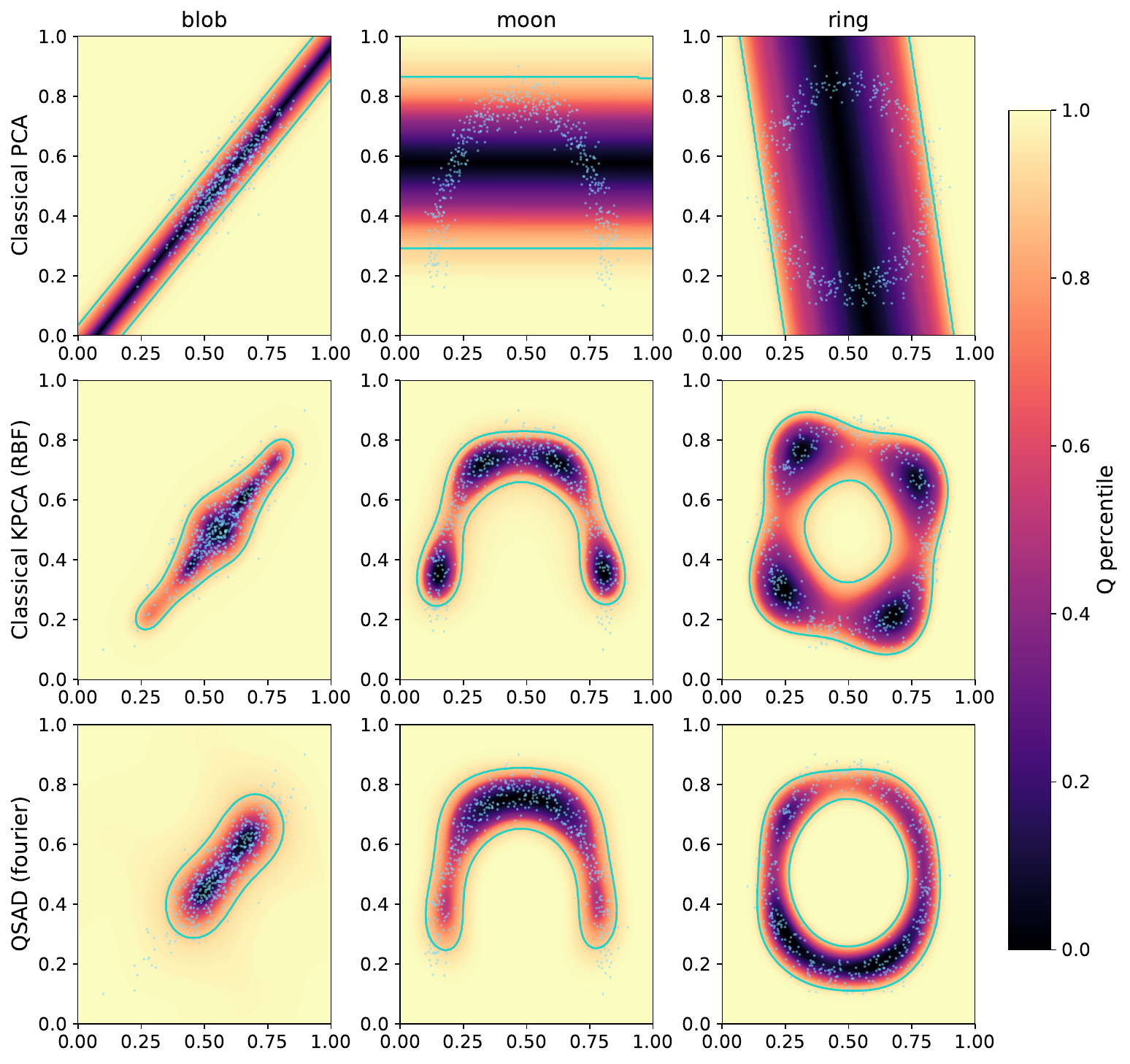}
\caption{Residual-percentile decision
regions for linear PCA (top), RBF kernel-PCA (middle), and QSPADE (bottom) on the
blob, moon, and ring.  Linear PCA cannot follow the curved or non-convex
supports; both nonlinear monitors do.}
\label{fig:expA-datasets}
\end{figure}

Figure~\ref{fig:expA-kernels} places the classical and the quantum side of
the kernel-PCA correspondence of Section~\ref{subsec:expA} side by side
on the moon: classical kernel-PCA with polynomial, RBF, and Laplacian kernels in
the top row, and QSPADE with the degree-two polynomial, random-feature, and
Fourier maps in the bottom row, under the same calibration and with the
one-class AUC in each panel.  The columns pair related kernels.  The two
polynomial methods produce similarly broad acceptance regions, the weakest
separation among the classical baselines; the random-feature map is a
sixteen-dimensional approximation of the exact RBF kernel above it, visibly
coarser at this encoding size and the weakest panel on the QSPADE side; and the
Laplacian and Fourier panels give the tightest boundaries on each
side. Because the classical-data experiment is a
consistency check rather than an accuracy claim, the encoding is not tuned for
performance; the Fourier map serves as a fixed representative in the main text,
with the comparison across kernels and maps reported here.
Figure~\ref{fig:expA-datasets} shows the three-method comparison
across the three synthetic classical datasets, the blob, moon, and ring
geometries, that underlies Table~\ref{tab:expA}:
linear PCA cannot follow the curved or non-convex supports, most strikingly the
ring, where its straight acceptance band runs through the empty centre, whereas
the RBF kernel-PCA baseline and QSPADE both track the manifold.

\subsection{Resolution--shot trade-off (quantum-native experiment)}

The QSPADE score is read off finite measurement shots: the acceptance
$\bra{\psi}M^f_{\mu_\alpha,T}\ket{\psi}$ is estimated from $m$ binary
accept/reject outcomes, so the empirical residual carries binomial sampling
noise of order $m^{-1/2}$.  The data and calibration are those of the
quantum-native experiment of Section~\ref{subsec:expB}: the detector is trained
on the graded normal ladder of the ordered TFIM at $h_A=0.4$ and calibrated at
$\alpha=0.99$, and the AUC separates fresh normal-mixture samples from
paramagnetic anomalies at $h_B=1.2$.  Figure~\ref{fig:expB-auc} plots the detection ROC-AUC
against $T$ for several shot budgets, the $\infty$-shot curve being the exact
score.  A sharper detector (small $T$) places the normal and anomalous
acceptances near the well-separated extremes $0$ and $1$, where the per-state
variance is small, and stays robust down to a few tens of shots; a softer
detector (large $T$) compresses the acceptances, so the normal--anomaly gap shrinks relative to the shot noise and the AUC
degrades. This is the cost
side of the resolution knob: the larger $T$ that
yields a more graded, information-rich score also demands more measurement shots
to resolve.

\begin{figure}
\centering
\includegraphics[width=0.9\linewidth]{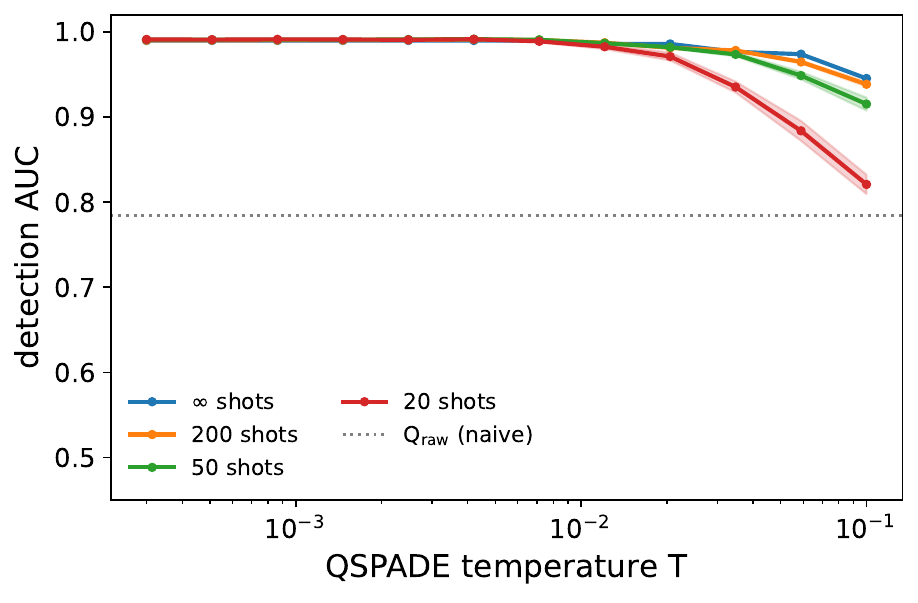}
\caption{Detection ROC-AUC versus resolution $T$ at infinite and finite
measurement-shot budgets for the quantum-native experiment ($n=10$):
normal-mixture samples from the graded ladder at $h_A=0.4$ versus paramagnetic
anomalies at $h_B=1.2$.  Smaller $T$ is markedly more robust to shot
noise; the dotted line is the density-weighted $Q_{\rm raw}$ baseline.}
\label{fig:expB-auc}
\end{figure}

\end{document}